\numberwithin{equation}{section}
\newtheorem{theorem}{Theorem}
\newtheorem{lemma}[theorem]{Lemma}
\theoremstyle{definition}
\theoremstyle{remark}
\newtheorem{remark}{Remark}[section]
\newcommand\R{{\mathbb R}}
\newcommand\T{{\mathbb T}}
\newcommand\Z{{\mathbb Z}}
\newcommand{\CC}{\mathbb{C}}
\newcommand{\LL}{\mathbf{L}}
\newcommand{\PP}{\mathbf{P}}
\newcommand{\cA}{\mathcal A}
\newcommand{\cD}{\mathcal D}
\newcommand{\cH}{\mathcal H}
\newcommand{\cL}{\mathcal L}
\newcommand{\cR}{\mathcal R}
\newcommand{\cS}{\mathcal S}
\newcommand{\cU}{\mathcal U}
\def\M{\mathcal{M}}
\def\H{\mathcal{H}}
\def\LL{{\bf L}}
\def\CC{{\bf C}}
\def\PP{{\bf P}}
\def\eps{{\varepsilon}}
\newcommand{\dd}{{\, \mathrm d}}
\newcommand{\ds}{\displaystyle}
\newcommand{\fa}{\forall \,}
\newcommand{\sk}{\smallskip}
\newcommand{\mk}{\medskip}
\newcommand{\rd}{{\rm d}}
\newcommand{\var}{\varepsilon}
\let\oldmarginpar\marginpar
\renewcommand\marginpar[1]{\-\oldmarginpar[\raggedleft\footnotesize #1]%
{\raggedright\footnotesize #1}}
\def\signec{\bigskip \begin{center} {\sc Eric
      Carlen\par\vspace{3mm}
      Department of Mathematics\par
      Rutgers University\par
      110 Felinghuysen Rd., Piscataway N.J. 08541 U.S.A.
      \par\vspace{3mm} e-mail:}
    \tt{carlen@math.rutgers.edu} \end{center}}
\def\signjl{\bigskip \begin{center} {\sc Joel
      Lebowitz\par\vspace{3mm}
      Department of Mathematics \& Department of Physics\par
      Rutgers University\par
      110 Felinghuysen Rd., Piscataway N.J. 08541 U.S.A.
      \par\vspace{3mm} e-mail:}
    \tt{lebowitz@math.rutgers.edu} \end{center}}
\def\signcm{\bigskip \begin{center} {\sc Cl\'ement
      Mouhot\par\vspace{3mm}
      DPMMS, Centre for Mathematical Sciences\par
      University of Cambridge\par
      Wilberforce road, Cambridge CB3 0WA, U.K.
      \par\vspace{3mm} e-mail:}
    \tt{c.mouhot@dpmms.cam.ac.uk} \end{center}}
\def\signre{\bigskip \begin{center} {\sc Rafaelle
      Esposito\par\vspace{3mm}
      Dipartimento di Matematica\par
      Universit\`a di l'Aquila\par
      Coppito, 67100 AQ, ITALY
      \par\vspace{3mm} e-mail:}
    \tt{esposito@roma2.infn.it} \end{center}}
\def\signrm{\bigskip \begin{center} {\sc Rossana
      Marra\par\vspace{3mm}
      Dipartimento di Fisica and Unit\`a INFN\par
      Universit\`a di Roma Tor Vergata\par
      00133 Roma, ITALY
      \par\vspace{3mm} e-mail:}
    \tt{marra@roma2.infn.it} \end{center}}
\begin{document}

\title[Uniqueness of the NESS for a $1$d BGK model in kinetic
theory]{Uniqueness of the Non-Equilibrium Steady State for a $1$d BGK model in kinetic theory}


\author{E. Carlen, R. Esposito, J. Lebowitz, R. Marra \&
  C. Mouhot}

\date{\today}

\begin{abstract} We continue our investigation of kinetic models of a
  one-dimensional gas in contact with homogeneous thermal reservoirs
  at different temperatures. Nonlinear collisional interactions
  between particles are modeled by a so-called BGK dynamics which
  conserves local energy and particle density. Weighting the nonlinear
  BGK term with a parameter $\alpha\in [0,1]$, and the linear
  interaction with the reservoirs by $(1-\alpha)$, we prove that for
  some $\alpha$ close enough to zero, the explicit spatially uniform
  non-equilibrium stable state (NESS) is \emph{unique}, and there
  are no spatially non-uniform NESS with a spatial density $\rho$ belonging to $L^p$ for any $p>1$.
  We also show that for all
  $\alpha\in [0,1]$, the spatially uniform NESS is \emph{dynamically
    stable}, with small perturbation converging to zero exponentially
  fast. 

\end{abstract}

\maketitle
\tableofcontents

\section{Introduction}

This paper is a contribution to the theory of non-equilibrium steady
states (NESS), of open systems in the particular context of kinetic
theory. The understanding of NESS, their properties, uniqueness or
lack thereof and stability or lack thereof, represents a challenge in
mathematical physics due to the fact that the dynamics are nonlinear, 
non-Markovian and  the absence of an entropy principle. Our
main result is a uniqueness and stability theorem for the NESS in a
simple nonlinear model.

\subsection{The model} We briefly describe the sort of underlying
particle model that would lead to the type of kinetic equation that we
study here.  It consists of a gas of particles on the one-dimensional
torus $\T$, that interact only through binary energy conserving
collisions, however we also suppose that there are two types of scatterers
distributed on the torus according to some Poisson distribution, as in
a Lorentz model, except that each scatterer has a temperature, $T_1$
or $T_2$ depending on its type, and a certain radius of interaction,
so that when a gas particle travels across the interaction interval, a
Poisson clock runs and if it goes off, the particle assumes a new
velocity chosen at random according to the Maxwellian distribution for
the temperature of the scatterer.

In an appropriate scaling limit, the net effect of the background
scatterers is to produce two uniform thermal reservoirs. Whatever the
speed of a gas particle, its rate of interaction with the reservoirs
depends only on the density of the scatterers, again in an appropriate
limit in which their intervals of interaction are unlikely to
overlap.  The kinetic equation that one would expect to arise from
such a model in such a limit would be of the type
\eqref{bgkthermo} below, except that one might expect a Kac-Povzner
type collision kernel, also known as a ``soft-spheres'' kernel \cite{Ce,PPS}. Our work concerns the
kinetic equation itself, and not its rigorous derivation from an
underlying particle system, although the brief description of such a
system that we have given hopefully illuminates the physical context
of our model.

We are concerned with the existence and uniqueness of NESS for our
system. We make a further simplification, and model the gas particle
collisions with a BGK collision kernel \cite{BGK,Pe89,PePu93}. This will render the existence
of NESS trivial, but the uniqueness is still a challenging problem,
and we shall only prove part of what we conjecture to be true.

These considerations bring us to the following one dimensional kinetic
model:
\begin{equation}\label{bgkthermo}
  \partial_t f + v \partial_x f
  = \alpha \mathcal M_f + (1-\alpha) \rho_f
  \left( \frac{M_{T_1}+M_{T_2}}{2} \right) -f
\end{equation}
where $\alpha \in [0,1]$, $f = f(t,x,v)$, $x \in \T$, $v \in \R$,
and
\begin{equation}\label{maxdef}
  M_{T_i}(v) := \frac{e^{-\frac{|v|^2}{2T_i}}}{\sqrt{2\pi T_i}}, \quad
  \mathcal M_f(t,x,v)
  := \rho_f(t,x) \frac{e^{-\frac{|v|^2}{2T_f(t,x)}}}{\sqrt{2\pi T_f(t,x)}},
\end{equation}
with
\begin{equation}\label{RTdef}
  \begin{cases}\ds
    \rho_f(t,x) := \int_\R f(t,x,v)\dd v,  \\[3mm] \ds
    P_f(t,x) := \int_\R v^2 f(t,x,v) \dd v =\rho_f(x) T_f(t,x),
  \end{cases}
\end{equation}
being the {\em spatial density} and {\em
  pressure} corresponding to $f$.  $T_f$ is then the {\em temperature} corresponding to $f$,  and $T_1$ and $T_2 \in (0,+\infty)$
being the two temperatures of the reservoirs.

The linear terms on the right in \eqref{bgkthermo} that are multiplied
by $(1-\alpha)$ model the interaction of particles with two
reservoirs, both acting everywhere in space.  Each time a particle
interacts with one of the reservoirs, it velocity is replaced by a new
velocity selected at random from the corresponding Maxwellian
distribution. We have taken both of these Maxwellians to have zero
mean velocity which is natural for static reservoirs.

The nonlinear term on the right in \eqref{bgkthermo}  that is
multiplied by $\alpha$ represents the effect of collisions between
particles. The collision term $\mathcal M_f$ is of  the BGK type
(see~\cite{BGK}), except that as usual in one dimensional kinetic
models, it conserves only mass and energy, not momentum. Indeed, binary
collisions that conserve both energy and momentum are trivial in one
dimension: only an exchange of velocities is possible. For this
reason, the mean velocity of ${\mathcal M}_f$ is zero.

The term $-f$ on the right in \eqref{bgkthermo} is the {\em loss term}
corresponding to both interactions with the reservoirs and other
particles:  after such interactions, particles vacate their pre-interaction state.

Without loss of generality, we choose units in which the torus has
unit volume and there is unit total mass:
\[
  \T = [-1/2,1/2] \quad \mbox{ and } \int_{-1/2} ^{1/2} \int_{\R} f
  (x,v) \dd x \dd v=1.
\]

\subsection{Previous results}

In our previous papers \cite{CLM,CELMM}, we have studied related
issues for  related models. In \cite{CLM} we proved the
existence of spatially homogeneous non-equilibrium steady states and
exponential convergence to them for related spatially homogeneous
models, but with more realistic collision mechanisms, and also in
higher dimensions.  In \cite{CELMM} we studied the exponential rate of
convergence to steady state for a non spatially homogeneous equation
of the type \eqref{bgkthermo} but with a modified collision mechanism
that permitted the equation to be interpreted as the Kolmogorov
forward equation for a non-stationary Markov process: we replaced the
space-dependent local temperature $T_f(t,x)$ by the global temperature
\[
  \mathcal T_f(t) := \int_{\T \times \R} v^2 f \dd x\dd v
\]
of $f$ that depends only on time. We were then able to apply Doeblin's 
method \cite{Var} to prove the exponential convergence.  The use of Doeblin's method
to study linear dynamical models originates with \cite{BL,LB}.

The rigorous study of NESS  for nonlinear kinetic equations remains very challenging. 
One problem that has been studied by several authors is the Boltzmann equation in a slab with
different temperatures on the two walls, with and without external forces. At this level of generality, one cannot always expect
a unique NESS -- there may be a symmetry breaking transition, such as the onset of Rayleigh-Bernard flow.  Even without
external forces, existence of NESS for the slab problem is a highly non-trivial, and existing results \cite{AN,Ce2} do not yield provide any information on uniqueness or non-uniqueness.

More recently, the Boltzmann equation in more general domains and with non-isothermal boundary conditions has been investigated \cite{EGKM} where it
is proved that when the temperature on the boundary is sufficiently close to constant, then there is an NESS that is 
close to the uniform Maxwellian for the mean boundary temperature, and in a small neighborhood of this Maxwellian, there is no other NESS. 
However, it is not known that there are not other NESS further away, no matter how small the non-zero temperature difference may be. 

\subsection{Question studied and conjecture}

Observe that the reservoirs will tend to damp out any mean velocity
since $M_{T_1}$ and $M_{T_2}$ have zero mean velocity. Likewise,
${\mathcal M}_f$ has zero mean velocity at each $x$, so the collision
gain term too will  tend to damp out any mean velocity.

Therefore, if $f=f(v)$ is any spatially homogeneous steady state,
$\int_\R v f {\rm d}v =0$. Moreover, the time and space homogeneity yield
$\partial_t f =0$ and $v \partial_x f =0$. Finally, multiplying both
sides of \eqref{bgkthermo} by $v^2$ and integrating over $x$ and $v$
shows
\[
  0 = (\alpha -1)T_f  + (1-\alpha)\frac{T_1+T_2}{2}, \qquad T_f :=
  \int_\R v^2 f(v) \dd v.
\]
Thus, the constant temperature in any spatially homogeneous steady
state $f$ must be $T_\infty := (T_1+T_2)/2$ if $\alpha \not =1$. Then
for any spatially homogeneous steady state $f$,
$\mathcal M_f = M_{(T_1+T_2)/2}$, and \eqref{bgkthermo} reduces to
$$
\alpha M_{\frac{T_1+T_2}{2}} + (1-\alpha) \rho_f \left(
  \frac{M_{T_1}+M_{T_2}}{2} \right) -f =0.
$$
Therefore the unique spatially homogeneous steady state is given
by
\begin{equation}\label{stt}
  f_\infty := \alpha M_{T_\infty} +
  (1-\alpha) \frac{M_{T_1}+M_{T_2}}{2}, \quad
  T_\infty = \frac{T_1+T_2}{2}. 
\end{equation}
Observe that $f_\infty$ is {\em not} Maxwellian as soon as
$\alpha \not = 1$.

If $\alpha = 0$, the term $\mathcal M_f$ is not present, the only
spatially homogeneous steady state is
$f_\infty = \frac12(M_{T_1} + M_{T_2})$, and the equation
\eqref{bgkthermo} is linear. It can be interpreted as the forward
equation of a Markov process and in \cite{CELMM} we used probabilistic
methods to prove that this steady state is unique and is approached
exponentially fast.  Hence, for $\alpha= 0$, there are no steady
states that are spatially inhomogeneous.

Next, consider the case $\alpha = 1$: there are no thermal reservoirs
and energy is conserved. There is a one-parameter infinite family of
steady states, namely $M_T$ for all $T>0$.  Moreover, if $f_0$ is such
that
\[
  \int_{\T \times \R} v^2 f_0(x,v) \dd x \dd v = T, \qquad
  \int_{\T \times \R} f_0(x,v) \ln f_0(x,v) \dd x \dd v < +\infty,
\]
and $f(t,x,v)$ is the solution of \eqref{bgkthermo} with initial datum
$f_0$, then
\[
  H \left(f(t,\cdot,\cdot) | M_T \right) =
  \int_{\T \times \R} f(t,x,v) \ln
  \frac{f(t,x,v)}{M_T(v)} \dd x \dd v
\]
decreases monotonically to zero, and is stationary only when
$f=M_T$. It follows that $M_T$ is the unique steady state among
solutions with second moment equal to $T$ and finite entropy, and thus
every steady state for $\alpha =1$ with finite second moment and
entropy is spatially homogeneous (and equal to $M_T$).

The question that motivates this paper is the study of the NESS in the
intermediate region $\alpha \in (0,1)$. We conjecture the following:
\mk

\noindent {\bf Conjecture (Uniqueness of the NESS for the BGK model
  with reservoirs).}  {\em For all $\alpha\in [0,1]$, the
  non-equilibrium steady state of \eqref{bgkthermo} is unique, regardless of the temperature difference, 
  spatially homogeneous and stable under  perturbations. We also
  expect this conjecture to hold in higher dimensions $x \in \T^d$,
  $v \in \R^d$.}

\subsection{Main results}

We give  a partial answer to this conjecture, showing
that it is satisfied when $\alpha$ is small enough. We first prove the
uniqueness:
\begin{theorem}\label{main}
  For all $T_1,T_2$, there is an explicitly computable $\alpha_0> 0$ such that for all
  $\alpha \in [0,\alpha_0)$, every steady state solution $f_\infty$ of
  \eqref{bgkthermo} that belongs to $L^1(\T \times \R)$, has finite
  second moment and is such that $\rho \in L^p(\T)$ for some $p>1$,
  is constant in $x$.
\end{theorem}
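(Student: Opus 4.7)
The plan is a Banach fixed-point argument around the explicit spatially uniform NESS $f_\infty$ in \eqref{stt}. Introduce the transport--loss operator $\mathcal L := v\partial_x + I$, which on periodic functions has an explicit integral inverse $K_v$ in $x$ for each $v\neq 0$; the steady-state equation \eqref{bgkthermo} becomes
$$f = (1-\alpha)\mathcal L^{-1}[\rho_f\Phi]+\alpha\,\mathcal L^{-1}[\mathcal M_f], \qquad \Phi := \tfrac12(M_{T_1}+M_{T_2}).$$
Integrating against $\mathrm dv$ and $v^2\,\mathrm dv$ yields a closed system of moment equations
$$\rho_f = (1-\alpha)\mathcal K\rho_f + \alpha A_0[\rho_f,T_f], \qquad P_f = (1-\alpha)\mathcal K_2\rho_f + \alpha A_2[\rho_f,T_f],$$
with $\mathcal K_j\rho(x) := \int_\R v^j \Phi(v)K_v\rho(x)\,\mathrm dv$ and $A_j[\rho,T](x) := \iint K_v(x,y)\rho(y) v^j M_{T(y)}(v)\,\mathrm dy\,\mathrm dv$. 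The pair $(\rho,T)=(1,T_\infty)$ solves these identities trivially, and pulling back to the original equation one checks that the unique steady state with these moments is $f_\infty$.

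The key linear ingredient is a spectral gap for $\mathcal K$. Fourier analysis in $x$ diagonalizes $K_v$ as $\widehat{K_v}(n) = (1+2\pi i n v)^{-1}$, so $\widehat{\mathcal K}(n) = \int_\R \Phi(v)(1+4\pi^2n^2v^2)^{-1}\,\mathrm dv$, which equals $1$ at $n=0$ and is strictly less than $1$ for $n\neq 0$. This gives a spectral gap $\lambda<1$ on mean-zero $L^2(\T)$. Since the convolution kernel $k$ of $\mathcal K$ is continuous and strictly positive on $\T$, a Doeblin argument extends the gap to each mean-zero $L^p(\T)$, $1\le p\le\infty$, so that $\|(I-(1-\alpha)\mathcal K)^{-1}\|_{L^p_0\to L^p_0}\le (1-(1-\alpha)\lambda)^{-1}$ uniformly in $\alpha\in[0,1]$.

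For a priori regularity, observe that $k$ has only a logarithmic singularity on the diagonal, so $k\in L^{p'}(\T)$ for every $p'<\infty$; by H\"older, $\rho_f\in L^p$, $p>1$, forces $\mathcal K\rho_f,\mathcal K_2\rho_f\in L^\infty$. A bootstrap using an analogous H\"older bound on $A_j$ (whose kernel $H(x,y,T)$ has at worst a $\log$-like singularity for $T$ bounded away from zero) then promotes $\rho_f,P_f$ to $L^\infty$, with bounds depending only on $T_1,T_2,\alpha_0$ and on $\|\rho_f\|_{L^p}$. Combined with the pointwise lower bounds $\rho_f\ge (1-\alpha)\inf k>0$ and $P_f\ge (1-\alpha)\inf k_2>0$ (immediate from positivity of the kernels and $\int\rho_f = 1$), this traps $T_f = P_f/\rho_f$ in a compact subinterval $[T_-,T_+]\subset(0,\infty)$.

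To close the contraction, set $u:=\rho_f-1$ and $\tau:=T_f-T_\infty$, subtract the constant-state identities from the moment equations, and apply $(I-(1-\alpha)\mathcal K)^{-1}$. On the regime $T_f\in[T_-,T_+]$ the maps $A_j[\rho,T]$ are Lipschitz in $(\rho,T)$ in the norms of interest, with constants depending only on $T_\pm$ and the a priori $L^\infty$ bounds; combined with the spectral-gap bound this yields
$$\|u\|_{L^p} + \|\tau\|_{L^\infty} \le C_0\,\alpha\,\bigl(\|u\|_{L^p}+\|\tau\|_{L^\infty}\bigr),$$
with $C_0$ uniform for $\alpha\in[0,\alpha_0]$. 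Choosing $\alpha_0 := 1/(2C_0)$ forces $u=\tau=0$, i.e.\ $\rho_f\equiv 1$ and $T_f\equiv T_\infty$; the kinetic equation then reduces to $v\partial_x f + f = f_\infty(v)$, whose unique periodic solution is the spatially constant $f\equiv f_\infty$. The hardest point is the Lipschitz estimate on $A_j$: the Maxwellian $M_T$ degenerates as $T\to 0^+$ and as $T\to\infty$, so it is precisely the $L^p$ hypothesis on $\rho_f$ that allows the bootstrap to pin $T_f$ into $[T_-,T_+]$ and thereby control the Lipschitz constants; without this a priori regularity the perturbation argument would not close, consistent with the fact that the conjectured full-range uniqueness for all $\alpha\in[0,1]$ must require a different method.
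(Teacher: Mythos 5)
Your proposal follows the same broad strategy as the paper --- a contraction argument for the local density $\rho$ around the constant state, driven by a spectral gap for the $\alpha=0$ part of the density map together with pointwise upper and lower a priori bounds on $\rho$ --- and the key linear operator you analyze (the $v$-average of $K_v=(v\partial_x+I)^{-1}$ weighted by $G$) coincides, in Fourier, with the one the paper obtains from its even/odd decomposition and the operator $(1-v^2\partial_x^2)^{-1}$. However, your a priori bootstrap for $\rho_f\in L^\infty$ has a circular step. Treating $\rho_f$ and $T_f$ as two coupled unknowns, you invoke a H\"older bound on $A_j$ whose kernel "has at worst a log-like singularity for $T$ bounded away from zero," and then use the resulting $L^\infty$ bound on $\rho_f$ (together with the lower bound on $P_f$) to trap $T_f$ in a compact interval $[T_-,T_+]$. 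But, as you yourself point out at the end, the Gaussian degenerates as $T\to 0^+$, so the H\"older bound on $A_j$ presupposes a positive lower bound on $T_f$; yet a positive lower bound on $T_f$ is exactly what you are trying to extract from the $L^\infty$ bound on $\rho_f$. Without an independent way to rule out density spikes where the local temperature collapses, the bootstrap does not close.

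The ingredient you are missing, and which the paper uses precisely to break this circle, is the elementary constant-pressure observation (Lemma~\ref{conpres}): for any finite-energy NESS on the one-dimensional torus, the momentum vanishes and the pressure is constant, $P_f\equiv P_\infty=(T_1+T_2)/2$. This follows directly by integrating the steady-state equation against $1$ and against $v$ and using periodicity of $\T$. It reduces the fixed point to the single unknown $\rho$ through the explicit relation $T_f(x)=P_\infty/\rho_f(x)$, so that $\mathcal M_f$ is determined by $\rho_f$ alone. That explicit relation is what makes the paper's gain-of-integrability bootstrap non-circular: the change of variable $w=\sqrt{\rho(y)/P_\infty}\,v$ inside the Gaussian integral turns the would-be small-$T$ singularity into the factor $\rho(y)^{1+r/2}$, yielding $\int_\T\rho^{1+r}\dd x\le \alpha A_r\int_\T\rho^{1+r/2}\dd x+B_r$ with no lower bound on $T$ required, and this iterates from $L^p$, $p>1$, up to $L^\infty$. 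As a secondary remark, your displayed contraction $\|u\|_{L^p}+\|\tau\|_{L^\infty}\le C_0\,\alpha\,(\|u\|_{L^p}+\|\tau\|_{L^\infty})$ is not what the two-moment system yields as written, since the $P_f$ equation contributes a term $(1-\alpha)\mathcal K_2 u$ that is not $O(\alpha)$; one instead bounds $\|\tau\|\lesssim\|u\|$ and closes in $u$ alone --- which again amounts to eliminating $T_f$ in favor of $\rho_f$, i.e.\ to the content of the constant-pressure lemma.
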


\begin{remark}
  Our method would also apply in higher dimensions {\em provided}
  there was non-trivial spatial dependence in only one direction on
  the torus, say the $x_1$ coordinate.  The decomposition between odd
  and even parts in the next section should be then modified by
  splitting along $v_1$ only. The rest of the analysis would be
  similar.
\end{remark}

We also prove the stability under perturbation for all $\alpha \in
[0,1]$. For this, we introduce the (real)  Hilbert Space $\cH_\alpha^1$ with inner product
\begin{equation}\label{sob}
\langle f,g\rangle_{\cH_\alpha^1} = \int_{\T \times \R}(f(x,v) (1- \partial_x^2) g(x,v)) \frac{1}{f_{\alpha,\infty}}\dd x \dd v\ .
\end{equation}
\begin{theorem}\label{stab}
  For all $\alpha \in [0,1]$, the spatially homogeneous steady state
  described above is asymptotically stable under perturbation in
  $\cH_\alpha^1$. Small perturbations
  decay exponentially fast in time in this space.
\end{theorem}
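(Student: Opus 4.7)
I would linearize around $f_\infty$, prove exponential decay of the linearized semigroup in $\cH_\alpha^1$ by a hypocoercivity argument, then close the nonlinear problem via a continuation argument using that $\|h\|_{\cH_\alpha^1}\ll 1$ makes the nonlinear term lower order. Writing $f=f_\infty+h$ with $\int_{\T\times\R}h\,dxdv=0$ (preserved by mass conservation), the smoothness of $\mathcal{M}_f$ in $(\rho_f,T_f)$ around $(1,T_\infty)$ yields
$$\partial_t h + v\partial_x h = \mathcal L h + \alpha\,\mathcal Q(h),$$
with $\mathcal L h = \mathcal K h - h$, $\mathcal K h(x,v)=\rho_h(x)\psi_1(v)+P_h(x)\psi_2(v)$ of rank two per fibre, and $\mathcal Q(h)$ a smooth quadratic-and-higher remainder. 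A direct diagonalization of $\mathcal K$ on $\mathrm{Span}(\psi_1,\psi_2)$ shows that $\mathcal L$ has eigenvalues $\{0,\alpha-1\}$ there and $-1$ on the orthogonal complement, so $\ker\mathcal L=\mathrm{Span}(f_\infty)$ for $\alpha\in[0,1)$ and is two-dimensional at $\alpha=1$. With $\Pi h(x,v)=\rho_h(x)f_\infty(v)$, the symmetric part of $\mathcal L$ satisfies $\langle \mathcal L h,h\rangle_{L^2(1/f_\infty)}\le -\mu_\alpha\|(I-\Pi)h\|^2_{L^2(1/f_\infty)}$ with $\mu_\alpha=\min(1-\alpha,1)>0$.

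\textbf{Hypocoercivity in $\cH_\alpha^1$.} The transport $v\partial_x$ is skew in $L^2(dxdv/f_\infty)$ and commutes with $\partial_x$, and $\mathcal L$ is $x$-local and also commutes with $\partial_x$; hence $\tfrac{d}{dt}\tfrac12\|h\|^2_{\cH_\alpha^1}\le -\mu_\alpha\|(I-\Pi)h\|^2_{\cH_\alpha^1}$, which gives no decay on the hydrodynamic mode $\Pi h$. Following a Dolbeault--Mouhot--Schmeiser-type idea, I introduce the Lyapunov functional
$$\mathcal E_\varepsilon(h) = \tfrac12\|h\|^2_{\cH_\alpha^1} + \varepsilon\int_\T (\partial_x \rho_h)(x)\,J_h(x)\,dx,\qquad J_h(x):=\int_\R v\,h(x,v)\,dv,$$
with a small $\varepsilon>0$; Cauchy--Schwarz yields $\mathcal E_\varepsilon\sim\|h\|^2_{\cH_\alpha^1}$. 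The moment identities $\partial_t\rho_h = -\partial_x J_h$ and $\partial_t J_h + \partial_x P_h = -J_h$ (obtained by integrating the linearized equation against $1$ and $v$) give
$$\varepsilon\frac{d}{dt}\int_\T\partial_x\rho_h\,J_h\,dx = \varepsilon\int_\T(\partial_x J_h)^2\,dx - \varepsilon\int_\T\partial_x\rho_h\,\partial_x P_h\,dx - \varepsilon\int_\T\partial_x\rho_h\,J_h\,dx.$$
Decomposing $P_h = T_\infty\rho_h + (P_h - T_\infty\rho_h)$, with the latter piece lying in $(I-\Pi)$ and thus controlled by the $\mathcal L$-dissipation, the middle term contributes the dominant $-\varepsilon T_\infty\|\partial_x\rho_h\|^2_{L^2(\T)}$; the remaining pieces are absorbed by the $\mathcal L$-dissipation and Young's inequality. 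The Poincar\'e inequality on $\T$ (valid since $\int\rho_h\,dx = 0$) then converts $\|\partial_x\rho_h\|^2_{L^2(\T)}$ into control of $\|\Pi h\|^2_{\cH_\alpha^1}$. Choosing $\varepsilon$ small enough gives $\tfrac{d}{dt}\mathcal E_\varepsilon\le -\kappa\mathcal E_\varepsilon$, hence exponential decay of $\|h\|_{\cH_\alpha^1}$ for the linearized dynamics.

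\textbf{Nonlinear closure and main obstacle.} The remainder $\alpha\mathcal Q(h)$ equals $\alpha[\mathcal M_f - \mathcal M_{f_\infty} - D\mathcal M_{f_\infty}[h]]$, a smooth function of $(\rho_h,P_h)$ times Gaussian-decaying $v$-factors; combined with the Sobolev embedding $\cH_\alpha^1\hookrightarrow L^\infty_x$ on $\T$, this yields $\|\alpha\mathcal Q(h)\|_{\cH_\alpha^1}\la\|h\|^2_{\cH_\alpha^1}$ as long as $\|h\|_{\cH_\alpha^1}\ll 1$ (so that $\rho_f,T_f$ remain positive). A standard continuation argument then upgrades the linear exponential decay into nonlinear asymptotic stability. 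The main technical difficulty is the calibration of $\varepsilon$ in the cross term: it must be small enough that the positive $\varepsilon\|\partial_x J_h\|^2$ piece and the cross-interactions with $(I-\Pi)h$ are absorbed by the $\mathcal L$-dissipation and the dominant $-\varepsilon T_\infty\|\partial_x\rho_h\|^2$, yet large enough to yield an effective rate. At $\alpha = 1$, $P_h$ additionally lies in $\ker\mathcal L$, so one must add a second cross term (involving $\partial_x P_h$ and $M_3 = \int v^3 h\,dv$, which relaxes at rate $1$) to handle temperature fluctuations, but the hypocoercivity structure is otherwise unchanged.
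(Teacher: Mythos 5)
Your overall route matches the second of the two hypocoercivity proofs the paper gives: linearize, establish microscopic coercivity of the collision operator, run a Dolbeault--Mouhot--Schmeiser-type modified-energy argument using a density--current cross term, then close the nonlinear problem by a Duhamel/continuation estimate using that $\cS$ and $\cL_\alpha$ commute with $\partial_x$. The paper also gives a separate, more explicit proof (its Theorem on explicit rates) that diagonalizes by Fourier mode in $x$, uses the tridiagonal matrix representation of $v\partial_x$ in a Gram--Schmidt basis adapted to $f_{\infty,\alpha}$ (following Achleitner--Arnold--Carlen), and constructs explicit modified-norm matrices $\PP_k$; this buys quantitative values of $C$ and $\lambda$ in terms of $\alpha$ and $T_\infty$, which your approach does not readily produce.

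There is one genuine gap in your microscopic-coercivity step. You assert that ``a direct diagonalization of $\cK$'' gives $\langle \cL h, h\rangle_{L^2(1/f_\infty)} \le -\min(1-\alpha,1)\|(I-\Pi)h\|^2$. The diagonalization is correct as a statement about the spectrum of $\cL_\alpha$ --- the range of $\cK$ at each $x$ is invariant, $f_{\infty,\alpha}$ is a $0$-eigenvector, $(v^2/T_\infty-1)M_{T_\infty}$ is an $(\alpha-1)$-eigenvector, and everything in $\ker\cK$ is a $(-1)$-eigenvector --- but $\cL_\alpha$ is \emph{not} self-adjoint on $L^2(f_{\infty,\alpha}^{-1}\,dv)$ when $\alpha\in(0,1)$, because $M_{T_\infty}/f_{\infty,\alpha}$ is not constant. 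Concretely, the eigenvector $(v^2/T_\infty-1)M_{T_\infty}$ is not orthogonal in $L^2(f_{\infty,\alpha}^{-1})$ to $\ker\cK$, so the real spectrum of $\cL_\alpha$ does not bound the symmetric part, and one can exhibit $h$ for which your claimed inequality with $\mu_\alpha=\min(1-\alpha,1)$ fails. The paper's Lemma on microscopic coercivity resolves exactly this difficulty by projecting onto an orthonormal family $\{H_0,H_1,H_2\}$ adapted to $f_{\infty,\alpha}$, estimating the off-diagonal coupling term $\Pi^\perp(\cU_\alpha)$ explicitly, and then diagonalizing the resulting $2\times 2$ quadratic form; this yields the weaker but correct constant $(1-\alpha)/2$. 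Your argument needs a similar careful treatment of the cross term; once that is in place the DMS machinery runs as you describe (your unregularized cross term $\varepsilon\int_\T \partial_x\rho_h\,J_h\,dx$ is a standard H\'erau-type variant of the paper's regularized $\langle \cA h, h\rangle$ and is fine). Finally, at $\alpha=1$ your sketch of an additional cross term is plausible but not carried out; the paper sidesteps this case by citing the known result of Achleitner--Arnold--Carlen, so on this point both treatments are incomplete.
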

Theorem~\ref{stab} shows that if for some $\alpha> \alpha_0$ there do
exist non-uniform steady states, they do not arise as a branch
bifurcating off the family of spatially homogeneous steady state
solutions.

\subsection{Plan of the paper}

In Section~\ref{sec:relations}, we establish some useful relations on
the moments of any given NESS and introduce a decomposition between
odd and even parts. In Section~\ref{Sec2}, we prove lower and upper
pointwise bounds on local density and temperature of any given
NESS. In Section~\ref{Sec3}, we explain the contraction mapping
argument; it is in this section that we use that $\alpha$ is close
to zero. Finally in Section~\ref{sec:linearised}, we prove a local
stability result of the spatially homogeneous steady states for all
$\alpha \in [0,1]$. 

\section{Preliminaries: properties and decompositions of NESS}
\label{sec:relations}

\subsection{Zero momentum and constant pressure}

A partial result supporting our conjecture is that the pressure is
independent of $x$, as well as the momentum
$m_f(x) := \rho_f(x)u_f(x)$, the latter being zero:
\begin{lemma}\label{conpres}
  Let $f(x,v)$ be a probability density on $\T \times \R$ such that
  $v^2 f(x,v)$ is integrable, and suppose that $f(x,v)$ solves in a
  weak sense the equation
  \begin{equation}\label{conprA}
    v\partial_x f(x,v) = F(x,v) - f(x,v)
  \end{equation}
  where $F(x,v)$ is a measurable function such that $(1+|v|)F(x,v)$
  is integrable and 
  \begin{equation}\label{conprB}
    \forall \, x \in \T, \quad \int_\R F(x,v) \dd v = \rho_f(x)
    \quad{\rm and}\quad
    \int_\R v F(x,v) \dd v = 0.
  \end{equation}
  Then the pressure is constant and the momentum is zero: 
  \[
    \left\{
      \begin{array}{l} \ds
        P_f(x) = \int_\R v^2 f(x,v) \dd v = P_\infty
        \in \R\\[3mm] \ds 
        m_f(x) = \int_\R v f(x,v) \dd v = \rho_f(x)u_f(x)= 0.
      \end{array}
    \right.
  \]
\end{lemma}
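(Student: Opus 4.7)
The plan is to exploit the conservation-type identities encoded in the hypotheses on $F$ by testing the weak equation against $\psi(x)$ and $v\psi(x)$, where $\psi$ is smooth and periodic on $\T$. The integrability assumptions are exactly what is needed: $\int v^2 f \dd x \dd v <+\infty$ together with $\int f \dd x \dd v = 1$ gives $vf \in L^1$ by Cauchy--Schwarz, and $(1+|v|)F \in L^1$ makes $vF$ and $F$ integrable against bounded test functions. So these two test functions fall within the scope of the weak formulation of \eqref{conprA}.

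First I would integrate the equation in $v$: the right-hand side gives $\int_\R F \dd v - \int_\R f \dd v = \rho_f(x) - \rho_f(x) = 0$, while the left-hand side gives $\partial_x m_f(x)$. Formally this yields $\partial_x m_f = 0$; more precisely, testing \eqref{conprA} against $\varphi(x,v) = \psi(x)$ one obtains
\begin{equation*}
\int_\T m_f(x)\, \psi'(x)\dd x = 0 \qquad \text{for all smooth periodic } \psi,
\end{equation*}
so that $m_f$ is (a.e.) equal to a constant $c \in \R$ on the torus.

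Next I would multiply by $v$ and integrate, i.e.\ test against $\varphi(x,v) = v\psi(x)$. The right-hand side produces $\int v F \dd v - \int v f \dd v = -m_f(x) = -c$, while the left-hand side produces $\partial_x P_f(x)$ in the distributional sense. Integrating $\partial_x P_f = -c$ against a periodic $\psi$ and using periodicity of the torus (equivalently, taking $\psi \equiv 1$) forces $c = 0$, i.e.\ $m_f \equiv 0$. Then $\partial_x P_f = 0$ in the distributional sense on $\T$, so $P_f$ is constant.

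There is no real obstacle here: the only subtlety is to be careful that the manipulations above are legitimate at the level of the weak formulation, which is why I would phrase each step as a pairing against $\psi(x)$ or $v\psi(x)$ rather than as a pointwise calculation. The periodicity of $\T$ is used in an essential way in the final step to deduce $c=0$ from the fact that the derivative of the periodic function $P_f$ is a constant.
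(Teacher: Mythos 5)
Your proof is correct and follows essentially the same route as the paper: integrate the equation in $v$ to see that $m_f$ is constant, then multiply by $v$, integrate in $v$ to get $\partial_x P_f = -m_f$, and use periodicity of $\T$ to conclude both that the constant vanishes and that $P_f$ is constant. The only difference is cosmetic: you phrase each step as a pairing against the test functions $\psi(x)$ and $v\psi(x)$, whereas the paper carries out the same two moment computations at the formal level.
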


\begin{remark} Evidently, Lemma~\ref{conpres} applies to any finite
  energy NESS of our equation.
\end{remark}

\begin{proof}
  Integrating both sides of~\eqref{conprA} in $v$ yields
  \[
    \frac{\rd}{\rd x} m_f(x) = \rho_f(x)  -
    \rho_f(x) = 0.
  \]
  This proves that $m_f(x)$ is a constant $m_\infty \in \R$. Now
  multiplying both sides of~\eqref{conprA} by $v$ and integrating
  in $v$ yields
  \begin{equation}\label{consP}
    \frac{\rd}{\rd x} P_f(x) = -m_f(x) = - m_\infty.
  \end{equation}
  Integrating both sides of \eqref{consP} in $x$ shows that
  $m_\infty=0$, and $P_f(x) = P_\infty \in \R$ is constant.
\end{proof} 

\begin{remark}
  The proof of Lemma~\ref{conpres} takes advantage of the dimension
  being one. In higher dimension, the argument above would only show
  that $m_f$ is a divergence free vector field, but not necessarily
  constant.  We shall make further use of the one dimensionality of
  the model when proving pointwise bounds on the NESS.
\end{remark}

\subsection{Higher moments}

Multiplying the steady-state equation
\begin{equation}\label{SS}
 v \partial_x f = \alpha \mathcal M_f + (1-\alpha) \rho_f
  \left( \frac{M_{T_1}+M_{T_2}}{2} \right) -f
\end{equation}
by $v^2$,  and integrating in $v$ yields
\[
  \frac{\rd}{\rd x} \int_\R v^3 f(x,v)\dd v =
  (1-\alpha)\left(\frac{T_1+T_2}{2}\right)\left(\rho_f(x)-1\right),
\]
since by Lemma~\ref{conpres}, $P_f(x) = (T_1+T_2)/2$.  Next,
multiplying \eqref{SS} by $v^3$ and integrating yields
\[
  \frac{\rd}{\rd x} \int_\R v^4 f(x,v)\dd v = -\int_\R v^3 f(x,v)\dd
  v.
\]
Combining the last two equations yields
\begin{equation}\label{ST}
  -\frac{\rd^2}{\rd x^2} \int_\R v^4 f(x,v)\dd v =
  (1-\alpha)\left(\frac{T_1+T_2}{2}\right)\left(\rho_f(x) -1\right).
\end{equation}
Since the right hand side integrates to zero, we have
\[
  \int_\R v^4 f(x,v) \dd v - \int_{\T \times \R} v^4 f(x,v) \dd x \dd
  v= (1-\alpha)\left(\frac{T_1+T_2}{2}\right) \int_\T \psi(x-y)
  (\rho_f(y) -1){\rm d} y
\]
where
\[
  \psi(x) = \sum_{k\neq 0} \frac{e^{2\pi i kx}}{4\pi^2k^2} \qquad{\rm
    so \ that}\quad |\psi(x)| \leq \frac{1}{12}.
\]
It follows that
\begin{equation}\label{ST2}
  \left|    \int_\R v^4 f(x,v)\dd v   - \int_{\T \times \R} v^4
    f(x,v)\dd x \dd v \right| \leq  (1-\alpha)\left(\frac{T_1+T_2}{12}\right).
\end{equation}
In particular, for $\alpha$ close to $1$, $ \int v^4 f(x,v)\dd v$ is
nearly constant; its average is
\begin{equation}\label{ST4}
  \int_{\T \times \R} v^4 f(x,v)\dd v{\rm d}x =
  3\left[\alpha \left(\frac{T_1+T_2}{2}\right)^2 +
    (1-\alpha)\frac{T_1^2 + T_2^2}{2}\right]
\end{equation}
and thus the spatial fluctuations in $\int v^4 f(x,v)\dd v$ are a
small fraction of the mean for large temperatures.

\begin{lemma}\label{lowerby4}
  Let $f$ be a solution to \eqref{SS} such that
  \[
    \int_{\T\times \R}(1+v^2)f(x,v) \dd x\dd v < \infty
  \]
  and recall that $P_\infty := (T_1+T_2)/2$.  Then \eqref{ST2} is valid, and
  the spatial density $\rho_f$ satisfies
  \begin{equation}\label{ST3}
    \rho_f(x) \geq 
    \frac{1}{ 3(2-\alpha) + \frac{(1-\alpha)}{6P_\infty}}.
  \end{equation}
\end{lemma}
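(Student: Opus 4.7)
The plan is to combine a pointwise Cauchy--Schwarz inequality in the velocity variable with the two identities \eqref{ST2} and \eqref{ST4}, both of which are established immediately before the lemma statement under essentially the same integrability hypotheses. The validity of \eqref{ST2} under the hypotheses of the lemma is already contained in the derivation preceding it, so only the pointwise lower bound on $\rho_f$ requires genuinely new work.

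For the density bound, I would fix $x \in \T$ and apply Cauchy--Schwarz in $v$: writing the pressure as $P_f(x) = \int_\R (v^2 \sqrt{f})\cdot\sqrt{f}\dd v$ yields
$$P_f(x)^2 \leq \rho_f(x)\int_\R v^4 f(x,v)\dd v,$$
and since $P_f\equiv P_\infty$ by Lemma~\ref{conpres}, this reduces matters to bounding the pointwise fourth moment from above uniformly in $x$. Identity \eqref{ST2} gives
$$\int_\R v^4 f(x,v)\dd v \leq \int_{\T\times\R} v^4 f \dd v\dd x + (1-\alpha)\frac{P_\infty}{6},$$
while \eqref{ST4} evaluates the mean as $3\alpha P_\infty^2 + 3(1-\alpha)(T_1^2+T_2^2)/2$. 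The elementary inequality $T_1^2+T_2^2 \leq (T_1+T_2)^2 = 4P_\infty^2$ then yields $\int_{\T\times\R} v^4 f\dd v\dd x \leq 3(2-\alpha)P_\infty^2$, so
$$\int_\R v^4 f(x,v)\dd v \leq 3(2-\alpha)P_\infty^2 + (1-\alpha)\frac{P_\infty}{6}.$$
Plugging this into the Cauchy--Schwarz bound and dividing through by $P_\infty^2$ produces the claimed estimate.

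I do not foresee any serious obstacle: all the nontrivial work has already been done in establishing \eqref{ST2} and \eqref{ST4}, and what remains is a single Cauchy--Schwarz step followed by the elementary inequality on $T_1^2 + T_2^2$. The only care needed is to justify Cauchy--Schwarz (which uses $f \geq 0$ and $\int_\R v^4 f\dd v < \infty$ at almost every $x$, the latter being a consequence of the identities above) and to keep track of the constant $1/6$ in \eqref{ST2}, which arises from $\|\psi\|_\infty \leq 1/12$ combined with $\|\rho_f - 1\|_{L^1(\T)} \leq 2$.
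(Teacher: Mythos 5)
Your argument reproduces the paper's own proof: Cauchy--Schwarz in $v$ reduces the claimed density bound to a pointwise upper bound on $\int_\R v^4 f\,\dd v$, which you then obtain from \eqref{ST2} together with the mean value given in \eqref{ST4} and the elementary inequality $T_1^2+T_2^2\le 4P_\infty^2$. Structurally and numerically the two proofs coincide.

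That said, there is a genuine gap, which you inherit from the paper by accepting \eqref{ST4} without scrutiny. You write that ``all the nontrivial work has already been done in establishing \eqref{ST2} and \eqref{ST4},'' but \eqref{ST4} is only asserted in the text, not derived, and it is in fact \emph{not} an identity for a general steady state. Multiplying \eqref{SS} by $v^4$ and integrating over $\T\times\R$ (the transport term vanishes by periodicity) and using $\int_\R v^4\mathcal{M}_f\,\dd v = 3\rho_f T_f^2 = 3P_\infty^2/\rho_f$, one finds
\[
\int_{\T\times\R} v^4 f\,\dd x\,\dd v
= 3\alpha P_\infty^2\int_\T \frac{\dd x}{\rho_f(x)}
\;+\; 3(1-\alpha)\,\frac{T_1^2+T_2^2}{2}.
\]
By Jensen's inequality and the normalization $\int_\T\rho_f\,\dd x=1$, the factor $\int_\T\rho_f^{-1}\,\dd x$ is $\ge 1$, with equality only if $\rho_f\equiv 1$ --- which is exactly what the whole uniqueness program is attempting to establish. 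So \eqref{ST4} gives only a \emph{lower} bound on the spatial average of the fourth moment, and the inequality points the wrong way for the Cauchy--Schwarz step, which requires an \emph{upper} bound. A correct argument must either close the loop by a bootstrap for small $\alpha$ (the Cauchy--Schwarz lower bound on $\rho_f$ in terms of $A:=\int v^4 f$ controls $\int\rho_f^{-1}\le (A+(1-\alpha)P_\infty/6)/P_\infty^2$, which combined with the displayed identity closes the estimate when $3\alpha<1$), or bound $\int v^4 f$ by some independent means; in either case the bound \eqref{ST3} as stated does not simply follow from \eqref{ST2}--\eqref{ST4}, and this is precisely the point you should not have waved away.
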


\begin{proof}
  By the Cauchy-Schwarz inequality and Lemma~\ref{conpres},
  \[
    \fa \, x \in \T, \quad P_\infty = P_f(x) = \int_\R v^2f(x,v)\dd v
    \leq \left( \int_\R v^4 f(x,v)\dd v \right)^{1/2}\rho_f^{1/2}(x),
  \]  
  so that
  \[
    \rho_f(x) \geq  P_\infty^2 \left( \int_\R v^4 f(x,v)\dd v \right)^{-1}.
  \]  
  
  From \eqref{ST4}, we have the bounds
  \[
    \begin{cases}\ds
      \int_{\T \times \R} v^4 f(x,v)\dd x \dd v & \leq
      3(2-\alpha)P_\infty^2 \\
      \ds \sup_{x \in \T} \int_{\R}
      v^4 f(x,v)\dd v & \leq 3(2-\alpha)P_\infty^2 + \frac{(1-\alpha)
        P_\infty}{6}.
    \end{cases}
  \]
  Combining bounds yields the result.
\end{proof}

\subsection{Splitting between odd and even parts and a wave-like
  system}

We split a given steady state $f$ into even and odd parts $f = E + O$
with respect to the $v$ variable. The steady state
equation~\eqref{bgkthermo} can be rewritten as:
\begin{equation}\label{pair}
  \left\{ 
    \begin{array}{l}
      \ds v \partial_x E = - O \\[3mm]
      \ds v \partial_x O = F_\alpha - E
    \end{array}
  \right.
\end{equation}
where 
\begin{equation}\label{Fdef}
  F_\alpha(x,v):= \alpha \mathcal M_f(x,v) + (1-\alpha) \rho_f(x)
  G(v),
  \quad G := \left( \frac{M_{T_1}+M_{T_2}}{2} \right).
\end{equation}
Combining the two equations in (\ref{pair}), we obtain
\begin{equation}\label{ellip}
  \left(1 - v^2\partial_x^2\right)E =  F_\alpha.
\end{equation}
Note that for each $v\neq 0$, the operator $(1 - v^2\partial_x^2)$ is
invertible with a bounded kernel. The equation (\ref{ellip})
conveniently and efficiently expresses the iterated effects of
velocity averaging on the steady state, or rather on its even part.

\begin{lemma}\label{one}
  For each density $\rho$ on $\T$, there is at most one NESS $f$
  such that $\rho= \rho_f$.
\end{lemma}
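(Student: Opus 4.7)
The strategy is to show that the steady-state equation, together with the constant-pressure identity from Lemma~\ref{conpres}, allows one to reconstruct $f$ entirely from $\rho_f$. Concretely, I would show that the right-hand side $F_\alpha$ of \eqref{Fdef} is determined by $\rho_f$ alone, and then use the elliptic equation \eqref{ellip} to recover the even part $E$ of $f$, followed by \eqref{pair} to recover the odd part $O$.

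First I would pin down the constant pressure. By Lemma~\ref{conpres}, $P_f(x) = P_\infty$ for some constant. Multiplying \eqref{SS} by $v^2$ and integrating over $\T \times \R$, using that $\int_\R v^2 \mathcal M_f \dd v = \rho_f T_f = P_f = P_\infty$ and $\int_\T \rho_f \dd x = 1$, yields
\[
  0 = \alpha P_\infty + (1-\alpha)\frac{T_1+T_2}{2} - P_\infty,
\]
so $P_\infty = (T_1+T_2)/2$ whenever $\alpha \in [0,1)$. (The case $\alpha = 1$ is already dealt with in the introduction and falls outside the relevant regime.) Consequently $T_f(x) = P_\infty/\rho_f(x)$ depends only on $\rho_f$, hence so does the local Maxwellian $\mathcal M_f$, and therefore so does $F_\alpha$ as defined in \eqref{Fdef}.

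Next I would invert \eqref{ellip}. For each fixed $v \neq 0$, the operator $1 - v^2 \partial_x^2$ acting on periodic functions on $\T$ is a strictly positive, invertible operator; its inverse is convolution with the periodization of the bounded kernel $(2|v|)^{-1} e^{-|y|/|v|}$. Thus $E(\cdot,v)$ is uniquely determined by $F_\alpha(\cdot,v)$ pointwise in $v \neq 0$, while for $v=0$ the second equation in \eqref{pair} gives $E(x,0) = F_\alpha(x,0)$ directly. Once $E$ is known, the first equation in \eqref{pair} gives $O(x,v) = -v \partial_x E(x,v)$ (with $O(x,0)=0$ by oddness), so $f = E + O$ is uniquely reconstructed from $\rho_f$.

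The argument is essentially algebraic, and I do not anticipate a serious obstacle: the Gaussian decay in $v$ of $F_\alpha$ (inherited from $\mathcal M_f$ and $G$) together with the explicit kernel ensures that the reconstructed $E$ and $O$ lie in the natural function class associated with finite-energy NESS, so no delicate function-space manipulation is required. The only point that deserves a line of verification is that the reconstruction is consistent at $v=0$, which follows from the oddness of $O$ and the continuity of $F_\alpha$ in $v$.
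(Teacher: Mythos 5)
Your proof is correct and follows essentially the same route as the paper: pin down $P_\infty = (T_1+T_2)/2$, deduce that $T_f = P_\infty/\rho_f$ and hence $\mathcal M_f$ and $F_\alpha$ depend only on $\rho_f$, then recover $E$ by inverting $1 - v^2\partial_x^2$ and $O$ from $O = -v\partial_x E$. You supply a bit more detail than the paper (the explicit integration giving $P_\infty$, the $v=0$ consistency check, and the explicit remark that $\alpha=1$ must be excluded — a caveat the paper leaves implicit, even though for $\alpha=1$ all $M_T$ share the constant density, so the statement genuinely requires $\alpha<1$), but the argument is the same.
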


\begin{proof}
  We have seen that for any NESS, $P_\infty = (T_1+T_2)/2$, and
  then by Lemma~\ref{conpres}, $\rho_f T_f = P_\infty$, so that
  ${\mathcal M}_f$ is determined by $\rho$.  It follows that
  $F_\alpha$ is determined by $\rho$, and then since \eqref{ellip} is
  uniquely solvable, the uniqueness of $f$ follows.
\end{proof}

The formal solution of \eqref{ellip2} is
\begin{equation}\label{ellip2}
  E = \left(1 - v^2\partial_x^2\right)^{-1}F_\alpha 
\end{equation}
and can be written in terms of an explicit Green's function.
Integrating in $v$ yields
\begin{equation}\label{ellip3}
  \rho_f(x)  = \int_\R \left[ \left(1 - v^2\partial_x^2\right)^{-1}F_\alpha \right] \dd v.
\end{equation}

\begin{lemma}\label{fixP}
  Let $T_1,T_2>0 $ and $\alpha \in (0,1)$ and
  $P_\infty = (T_1+T_2)/2$. For any probability density $\rho=\rho(x)$ on
  $\T$, define $T(x) = P_\infty/\rho(x)$ and 
  \begin{equation}\label{Psidef}
    \begin{cases}\ds
      {\mathcal M}[\rho](x,v)
       & \ds := \frac{\rho(x)}{\sqrt{2\pi T(x)}}
      e^{- \frac{v^2}{2T(x)}} = \frac{\rho^{3/2}(x)} {\sqrt{2\pi P_\infty}}
      e^{- \frac{v^2\rho(x)}{2P_\infty}} \\[3mm] \ds F_{\alpha}[\rho](x,v)
       & \ds :=
      \alpha {\mathcal M}_\rho(x,v) + (1-\alpha)\rho(x) G(v)
      \\[3mm] \ds
      \Psi_\alpha[\rho](x)
       & \ds := \int_\R \left[ \left(1 -
          v^2\partial_x^2\right)^{-1}F_\alpha[\rho] \right] \dd v
    \end{cases}
  \end{equation}
  with $G$ defined as in \eqref{Fdef}.  Then for all $\rho$, $\Psi_\alpha[\rho[$ is a probability density on $\T$, and  $\rho$ is the spatial
  density of some NESS $f$ if and only if $\rho = \Psi_\alpha[\rho]$,
  and in this case the unique such NESS is given in terms of $\rho$ by
  \eqref{ellip2B} and \eqref{ellip2C} below.
\end{lemma}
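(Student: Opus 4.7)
The plan is to verify three things in turn: that $\Psi_\alpha[\rho]$ is a probability density for every probability density $\rho$; that any NESS with spatial density $\rho$ forces $\rho = \Psi_\alpha[\rho]$; and conversely that any such fixed point gives rise to an NESS, whose uniqueness is then Lemma~\ref{one}. Everything rests on two elementary properties of the operator $(1 - v^2 \partial_x^2)^{-1}$ on $\T$: for each fixed $v \neq 0$ it admits a strictly positive, explicit Green's function $K_v(x,y)$, proportional to $\cosh\bigl((1/2-|x-y|)/|v|\bigr)$ on the fundamental domain; and in Fourier it multiplies the mode $e^{2\pi i kx}$ by $(1+4\pi^2 k^2 v^2)^{-1}$, leaving constants invariant and hence preserving spatial integrals. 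Positivity of $\Psi_\alpha[\rho]$ is then immediate from $F_\alpha[\rho] \geq 0$, while the identity $\int_\R F_\alpha[\rho](x,v)\dd v = \alpha\rho(x)+(1-\alpha)\rho(x) = \rho(x)$ combined with Fubini yields $\int_\T \Psi_\alpha[\rho]\dd x = \int_\T\rho\dd x = 1$.

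For the only-if direction, I would take a NESS $f$ with $\rho_f = \rho$. By Lemma~\ref{conpres} and the moment computation carried out earlier in this section, $P_f \equiv P_\infty = (T_1+T_2)/2$, hence $T_f = P_\infty/\rho$ and $\mathcal{M}_f = \mathcal{M}[\rho]$; the right-hand side of \eqref{SS} is therefore $F_\alpha[\rho] - f$. The even/odd splitting \eqref{pair}--\eqref{ellip} then reads $(1-v^2\partial_x^2)E = F_\alpha[\rho]$, so $E = (1-v^2\partial_x^2)^{-1}F_\alpha[\rho]$. Integrating in $v$, with the odd part contributing zero, gives $\rho = \int_\R E\dd v = \Psi_\alpha[\rho]$.

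For the converse, given $\rho = \Psi_\alpha[\rho]$, I would define $E := (1-v^2\partial_x^2)^{-1}F_\alpha[\rho]$, $O := -v\partial_x E$, and $f := E + O$. The first line of \eqref{pair} is built in, while $v\partial_x O = -v^2\partial_x^2 E = F_\alpha[\rho] - E$ is the second; summing them recovers $v\partial_x f = F_\alpha[\rho] - f$. The point that must be checked is the nonlinear self-consistency $\rho_f = \rho$, so that the $F_\alpha[\rho]$ appearing as source really is $F_\alpha[\rho_f]$: this holds because $\int_\R O\dd v = 0$ by oddness while $\int_\R E\dd v = \Psi_\alpha[\rho] = \rho$ by hypothesis. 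Uniqueness of $f$ given $\rho$ is then Lemma~\ref{one}, and the formulas above for $E$ and $O$ are precisely the content of \eqref{ellip2B} and \eqref{ellip2C}.

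The only point requiring genuine care, rather than an obstacle, is the regularity of $E$ in $x$ needed both to define $O$ and to differentiate again when verifying the second line of \eqref{pair}. For each fixed $v \neq 0$ this is controlled by the smoothness of $K_v$, which upgrades an $L^1(\T)$ source to a $C^1(\T)$ output, with estimates degenerating only as $v\to 0$; and the value $v=0$ is harmless, since there the operator $(1-v^2\partial_x^2)^{-1}$ is the identity and both lines of \eqref{pair} degenerate to trivial pointwise identities.
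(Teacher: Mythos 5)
Your first two steps track the paper closely and are fine: the probability-density claim uses positivity and mass-preservation of $(1-v^2\partial_x^2)^{-1}$ exactly as the paper does, and the only-if direction correctly reduces to $\mathcal M_f=\mathcal M[\rho]$ via Lemma~\ref{conpres} plus the moment identity $P_f\equiv P_\infty$.

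The gap is in the converse. You verify the wave system for $f_\rho$ and then assert that the only self-consistency needed is $\rho_{f_\rho}=\rho$, which you establish from $\int_\R O\dd v=0$ and $\int_\R E\dd v=\Psi_\alpha[\rho]=\rho$. But $\rho_{f_\rho}=\rho$ only gives you $F_\alpha[\rho]=F_\alpha[\rho_{f_\rho}]$ as maps defined by \eqref{Psidef}, which \emph{hard-codes} the local temperature $T(x)=P_\infty/\rho(x)$. What must be shown for $f_\rho$ to solve the NESS equation \eqref{SS} is that $F_\alpha[\rho]$ coincides with $\alpha\mathcal M_{f_\rho}+(1-\alpha)\rho_{f_\rho}G$, where $\mathcal M_{f_\rho}$ is built from the \emph{actual} local temperature $T_{f_\rho}=P_{f_\rho}/\rho_{f_\rho}$ of the constructed $f_\rho$. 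Matching the two requires $P_{f_\rho}\equiv P_\infty$, and this does not follow from $\rho_{f_\rho}=\rho$ alone; it is a separate identity. The paper closes this by noting that Lemma~\ref{conpres} applies to $v\partial_x f_\rho=F_\alpha[\rho]-f_\rho$ (its hypotheses hold since $\int_\R F_\alpha[\rho]\dd v=\rho=\rho_{f_\rho}$ and $\int_\R vF_\alpha[\rho]\dd v=0$), giving that $P_{f_\rho}$ is a constant, and then the constant is pinned down by the standard moment computation: integrating the equation against $v^2$ over $\T\times\R$ annihilates the transport term and yields $\int_\T P_{f_\rho}\dd x=\int_{\T\times\R}v^2F_\alpha[\rho]\dd x\dd v=\alpha P_\infty+(1-\alpha)P_\infty=P_\infty$, using $\int_\T\rho\dd x=1$. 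Without this step your $f_\rho$ solves an equation that resembles, but has not yet been identified with, the NESS equation.
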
 

\begin{proof}
Let $\rho$ be any density.  Then $F_{\alpha}[\rho]$ is a probability density on $\T\times \R$.
Since $(1 + v^2\partial_x^2)^{-1}$ preserves both integrals and positivity,  $(1 + v^2\partial_x^2)^{-1}F_{\alpha}[\rho]$ is also a 
a probability density on $\T\times \R$, and hence its marginal, $\Psi_\alpha[\rho]$, is a probability density on $\T$. 

  Next, suppose that $\rho(x) = \rho_f(x)$ for some NESS $f(x,v)$. By
  Lemma~\ref{conpres}, ${\mathcal M}[\rho](x,v) = {\mathcal M}_f(x,v)$,
  and therefore $F_{\alpha}[\rho](x,v)$ is given in terms of $f$ by
  \eqref{Fdef}.  Then 
  $E(x,v)$, given by
  \eqref{ellip2},  is a probability density on $\T\times \R$. $E(x,v)$ is the even part of $f(x,v)$ and finally $O(x,v)$, the odd part of $f(x,v)$, is given
  by the first equation in \eqref{pair}.  Then by \eqref{ellip3},
  $\rho = \Psi_\alpha[\rho]$.

Finally,  suppose that $\rho(x)$ is a probability density on $\T$ such
  that $\rho = \Psi_\alpha[\rho]$.  Define $F_{\alpha}[\rho]$ by
  \eqref{Psidef}, and then define $E_{\alpha}[\rho](x,v)$ by
  \begin{equation}\label{ellip2B}
    E_{\alpha}[\rho] := \left(1 - v^2\partial_x^2\right)^{-1}F_{\alpha}[\rho],
  \end{equation}
  and then define
  \begin{equation}\label{ellip2C}
    O_{\alpha}[\rho] = - v\partial_xE_{\alpha}[\rho] \qquad{\rm and}\qquad
    f_\rho(x,v) = E_{\alpha}[\rho](x,y) + O_{\alpha}[\rho](x,y).
  \end{equation}
  Then
  \begin{eqnarray*}
    v\partial_x f_{\rho}(x,v)
    &=& -O_{\alpha}[\rho] -  v^2\partial_x^2 E_{\alpha}[\rho]\\
    &=& -O_{\alpha}[\rho] -E_{\alpha}[\rho] + (1-v^2\partial_x^2) E_{\alpha}[\rho]\\
    &=& F_{\alpha}[\rho] - f_\rho.
  \end{eqnarray*}
  Lemma~\ref{conpres} applies to this equation, and we conclude that
  $\rho_{f_\rho} T_{f_\rho} = P_\infty$. The fixed point equation
  $\rho = \Psi_\alpha[\rho]$ implies $\rho_{f_\rho} = \rho$ and
  ${\mathcal M}[\rho] = {\mathcal M}_{f_\rho}$.  This shows that
  $f_\rho$ is an NESS for our equation, and concludes the proof that
  $\rho$ is the spatial density of an NESS if and only if it is a
  fixed point of $\Psi_\alpha$.
\end{proof} 

It follows from Lemma~\ref{fixP} that our conjecture would be proved
if we could show that the constant density is the only fixed point of
$\Psi_\alpha$ for all $\alpha\in [0,1]$. We prove this for
sufficiently small $\alpha$ in the next section.

\section{Pointwise bounds on the moments of the NESS}\label{Sec2}

\subsection{Preliminaries}

We define the standard Fourier series of an integrable  function $r=r(x)$ on the
torus $\T = [-1/2,1/2]$ by
\[
  \hat r(k) := \int_0 ^1 r(x) e^{-2i\pi kx} \dd x, \quad k \in \Z,
\]
and we recall the inversion formula (when, say, the Fourier
modes $(\hat r(k))_{k \in \Z}$ are absolutely summable)
\[
  r(x) = \sum_{k \in \Z} \hat r(k) e^{2i\pi kx}.
\]

Define $\varphi_v(x)$ the fundamental solution to the Laplace equation
\begin{equation*}
  \left(1-(v \partial_x)^2\right)^{-1} \varphi_v(x) = \delta_0(x)
\end{equation*}
on the circle $\T = [-1/2,1/2]$. This fundamental solution is
explicit:
\begin{equation*}
  \varphi_v(x) = \sum_{m \in \Z} \phi_v(x+m), \quad \phi_v(x)
  := \frac{1}{2|v|} e^{-\frac{|x|}{|v|}}
\end{equation*}
and its formula in Fourier is
\begin{equation*}
  \hat \varphi_v(k) = \int_{-1/2} ^{1/2} e^{-i2\pi kx} \varphi_v(x)
  \dd x = \frac{1}{1+(2\pi)^2 v^2 k^2}, \quad k \in \Z.
\end{equation*}

The following bounds will be useful: $\varphi_v \in L^p(\T)$ for all
$p\in [1,+\infty]$, and
\begin{equation}\label{Lpbnd}
  \|\varphi_v\|_{L^p(\R)} = \left(\frac1p\right)^{\frac1p}
  \left(\frac{1}{2|v|}\right)^{\frac{p-1}{p}}, \ p \in [1,+\infty),
  \qquad \|\varphi_v\|_{L^\infty(\R)} = \left(\frac{1}{2|v|}\right).
\end{equation}
and it satisfies the lower bound 
\begin{equation}\label{Ppbnd}
  \fa x,y \in \T, \quad \varphi_v(x-y)  \geq \frac{1}{2|v|} e^{-\frac{1}{|v|}}.
\end{equation}

\subsection{Lower bound on the NESS}

We have already proved a uniform lower bound on the spatial density of
any NESS in Lemma~\ref{lowerby4}. We now prove a stronger result: a
uniform lower bound holds for every density in the range of
$\Psi_\alpha$.
\begin{lemma}[Pointwise lower bound]\label{lower}
  Let $\alpha \in [0,1]$ and let $\rho$ be any probability density on
  $\T$ of the form $\rho = \Psi_\alpha[\rho_0]$ for a probability
  density $\rho_0$ on $\T$. Then
  \begin{equation}\label{plb}
    \fa \, x \in \T, \quad \rho(x) \geq r_\infty
    := \frac{(1-\alpha)}{4\sqrt{e}} \int_{1 \leq |v|\leq
      2} \left( \frac{M_{T_1}+M_{T_2}}{2} \right) \dd v.
  \end{equation}
\end{lemma}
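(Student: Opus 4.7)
\textbf{Plan of proof for Lemma \ref{lower}.} The plan is to express $\Psi_\alpha[\rho_0]$ as a convolution against the Green's function $\varphi_v$ of $(1-v^2\partial_x^2)$, then keep only the manifestly positive reservoir contribution in $F_\alpha[\rho_0]$, and finally restrict the velocity integral to a bounded annulus where an easy pointwise lower bound on $\varphi_v$ is available.

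First I would write, using the convolution representation of $(1-v^2\partial_x^2)^{-1}$ on $\T$,
\[
  \rho(x) = \Psi_\alpha[\rho_0](x) = \int_\R \int_\T \varphi_v(x-y) \, F_\alpha[\rho_0](y,v) \,\dd y\,\dd v.
\]
Since ${\mathcal M}[\rho_0] \geq 0$ and $\varphi_v \geq 0$, we may drop the $\alpha {\mathcal M}[\rho_0]$ term and obtain
\[
  \rho(x) \geq (1-\alpha) \int_\R \int_\T \varphi_v(x-y)\, \rho_0(y)\, G(v) \,\dd y\,\dd v.
\]
Restricting the $v$-integration to $1 \leq |v| \leq 2$ still gives a lower bound (again by nonnegativity of the integrand).

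The next step is to control $\varphi_v(x-y)$ from below on the annulus $\{1 \leq |v| \leq 2\}$. Using \eqref{Ppbnd} we have $\varphi_v(x-y) \geq \tfrac{1}{2|v|} e^{-1/|v|}$; a short calculus exercise on the function $u \mapsto \tfrac{1}{2u} e^{-1/u}$ on $[1,2]$ (its derivative is $\tfrac{e^{-1/u}}{2u^3}(1-u) \leq 0$ there, so the minimum is attained at $u=2$) gives
\[
  \inf_{1 \leq |v| \leq 2} \varphi_v(x-y) \geq \frac{1}{4\sqrt{e}}.
\]
Plugging this into the previous inequality and using $\int_\T \rho_0(y)\,\dd y = 1$ yields
\[
  \rho(x) \geq \frac{1-\alpha}{4\sqrt{e}} \int_{1 \leq |v| \leq 2} G(v)\,\dd v,
\]
which is exactly the stated bound.

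There is no real obstacle here: the lemma is a clean positivity-plus-Green's-function argument, and the only slightly nontrivial point is the choice of annulus $\{1 \leq |v| \leq 2\}$, which is dictated by the fact that $\tfrac{1}{2|v|} e^{-1/|v|}$ is maximized at $|v|=1$ and one wants both the lower bound on $\varphi_v$ and a non-negligible mass $\int G(v)\,\dd v$ over the annulus. The only technical point worth checking carefully is that $(1-v^2\partial_x^2)^{-1}$ acts on nonnegative inputs exactly as convolution in $x$ with $\varphi_v \geq 0$, which is immediate from the explicit formula for $\varphi_v$ recalled just above.
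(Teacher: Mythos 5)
Your proposal is correct and follows essentially the same route as the paper: write $\rho=\Psi_\alpha[\rho_0]$ via the kernel $\varphi_v$, drop the nonnegative $\alpha\mathcal{M}[\rho_0]$ contribution, restrict to $1\le|v|\le 2$, and bound $\varphi_v$ below by its value at $|v|=2$ using the explicit formula \eqref{Ppbnd}. The only cosmetic difference is that you verify the monotonicity of $u\mapsto\tfrac1{2u}e^{-1/u}$ by computing the derivative, while the paper just observes the function peaks at $u=1$ and decreases afterward; both give $\inf_{1\le|v|\le2}\varphi_v\ge\tfrac1{4\sqrt e}$.
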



\begin{proof}
  Define $F_\alpha[\rho_0]$ in terms of $\rho_0$ using~\eqref{Psidef}.  Then
  \begin{equation}\label{plb1}
    \rho(x)= \int_{\T \times \R} \varphi_v(x-y) \, F_\alpha[\rho_0](y,v) \dd y
    \dd v.
  \end{equation}
  The operator $(1-(v \partial_x)^2)^{-1}$ preserves positivity, and
  since $F_\alpha[\rho_0] \geq(1-\alpha) \rho_0 G$, we obtain
  \begin{eqnarray*}
    \rho(x)
    &\geq&  (1-\alpha) \int_\R  G(v) \left( \int_{\T} \varphi_v(x-y) \,
           \rho_0(y) \dd y \right) \dd v\\
    &\geq&  (1-\alpha) \int_{ 1 \leq |v| \leq 2}  G(v) \left( \int_{\T} \varphi_v(x-y) \,
           \rho_0(y) \dd y \right) \dd v.
  \end{eqnarray*}
  
  The kernel $\varphi_v(x-y)$ is bounded below by
  $e^{-1/|v|}/(2|v|)$. The function defined by
  $t>0 \mapsto t^{-1}e^{-t^{-1}}$ vanishes as $t$ approaches zero or
  infinity, is maximum at $t=1$ and then decreases as $t$ increases,
  and has the value $1/(2 \sqrt e)$ at $t=2$. This implies
  $$\varphi_v(x-y) \geq \frac{1}{4\sqrt e} $$
  for all $x,y$ whenever $1 \leq |v| \leq 2$, which concludes the
  proof.
\end{proof}

\subsection{Upper bound on the NESS}

\begin{lemma}[Gain of integrability of $\Psi_\alpha$]
  Let $\alpha \in [0,1]$ and $r \in [0,1)$, and let
  $\rho_0 \in L^{1+r/2}(\T)$ a probability density. Then
  $\rho := \Psi_\alpha[\rho_0]\in L^{1+r}(\T)$, and there are
  $A_r, B_r >0$ depending only on $r$ and degenerating as $r \to 1$
  such that
  \begin{equation}\label{STU}
    \int_\T \rho^{1+r}\dd x \leq \alpha A_r \int_\T \rho_0^{1+r/2}\dd x + B_r\ .
  \end{equation}

  As a consequence, if $\rho=\rho_0 \in L^{1+r/2}(\T)$ is a fixed
  point of $\Psi_\alpha$, there is a constant $K_{r,\alpha} >0$
  depending only on $r$ and $\alpha$ and monotone increasing in
  $\alpha$ such that
  \[
    \int_\T \rho^{1+r}\dd x \leq K_{r,\alpha}.
  \]
\end{lemma}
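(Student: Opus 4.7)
The plan is to exploit that $F_\alpha[\rho_0]$ is a probability density on $\T \times \R$, use Hölder's inequality to absorb one factor of $F_\alpha$ by normalisation, then push the remaining weight onto the kernel $\varphi_v$ where the $L^{1+r}$ bound from~\eqref{Lpbnd} will match a Gaussian moment of $\mathcal{M}[\rho_0]$. By Lemma~\ref{fixP},
\[
\rho(x) = \int_\T \int_\R \varphi_v(x-y)\, F_\alpha[\rho_0](y,v)\dd y\dd v.
\]
Splitting the integrand as $\varphi_v \, F_\alpha^{1/(1+r)} \cdot F_\alpha^{r/(1+r)}$ and applying Hölder's inequality with conjugate exponents $(1+r, (1+r)/r)$ on the measure $\dd y\dd v$, together with $\iint F_\alpha = 1$, yields the pointwise bound $\rho(x)^{1+r} \leq \iint \varphi_v(x-y)^{1+r} F_\alpha[\rho_0](y,v)\dd y\dd v$. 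Integrating in $x$ and using translation invariance, the task reduces to estimating $\iint \|\varphi_v\|_{L^{1+r}(\T)}^{1+r}\, F_\alpha[\rho_0](y,v) \dd v \dd y$.

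From~\eqref{Lpbnd} one reads off $\|\varphi_v\|_{L^{1+r}(\T)}^{1+r} \leq C_r(1 + |v|^{-r})$: the $|v|^{-r}$ singularity comes from $v \to 0$, and the constant piece from $|v|\gtrsim 1$, where $\varphi_v$ is uniformly bounded on $\T$ since it is a probability density thereon. The crucial computation is the Gaussian moment: after the change of variable $v = w\sqrt{P_\infty/\rho_0(y)}$,
\[
\int_\R |v|^{-r}\,\mathcal{M}[\rho_0](y,v)\dd v = C'_r\,\rho_0(y)^{1+r/2}, \qquad C'_r \propto \int_\R |w|^{-r} e^{-w^2/2}\dd w,
\]
and analogously $\int_\R G(v)(1+|v|^{-r})\dd v =: 1 + D_r < \infty$ for $r < 1$. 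Splitting $F_\alpha = \alpha\mathcal{M}[\rho_0] + (1-\alpha)\rho_0 G$ and using $\int_\T \rho_0 = 1$ then yields~\eqref{STU} with $A_r := C_r C'_r$ and $B_r := C_r(1 + D_r)$; both constants diverge as $r \to 1$ through the identity $\int_\R |w|^{-r} e^{-w^2/2}\dd w = 2^{(1-r)/2}\,\Gamma\bigl((1-r)/2\bigr)$, which gives the announced degeneration.

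For the fixed-point consequence, substituting $\rho_0 = \rho$ into~\eqref{STU} and setting $X := \int_\T \rho^{1+r}\dd x$ gives $X \leq \alpha A_r \int_\T \rho^{1+r/2}\dd x + B_r$. Since $\T$ carries a probability measure, the Lyapunov inequality yields $\int_\T \rho^{1+r/2}\dd x \leq X^\theta$ with $\theta := (2+r)/(2(1+r)) \in [1/2, 1)$, so $X \leq \alpha A_r X^\theta + B_r$; Young's inequality (permissible since $\theta < 1$) then bounds $X$ by an explicit constant $K_{r,\alpha}$ depending polynomially on $\alpha A_r + B_r$, hence monotone increasing in $\alpha$.

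The main obstacle is the exponent bookkeeping in the kernel step: the singularity of $\|\varphi_v\|_{L^{1+r}(\T)}^{1+r}$ near $v = 0$ must be \emph{exactly} $|v|^{-r}$, for this is precisely the weight whose Gaussian moment against $\mathcal{M}[\rho_0]$ scales as $\rho_0^{1+r/2}$ — any mismatch would break the balance between the two sides of~\eqref{STU}. Once this exponent matching is in place, everything else is routine, and the same matching explains why the estimate collapses at the critical value $r = 1$.
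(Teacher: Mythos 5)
Your proof is correct and arrives at the same estimate through the same essential ingredients — the representation $\rho = \int_\R \varphi_v * F_\alpha[\rho_0]\,\dd v$, an $L^p$ bound on the Green function producing a $|v|^{-r}$ weight, the Gaussian change of variables yielding the $\rho_0^{1+r/2}$ scaling, and the bootstrap $X\leq\alpha A_r X^\theta + B_r$ with $\theta<1$ — but the H\"older cut is genuinely different. The paper multiplies the representation of $\rho$ by $\rho^r$, integrates in $x$, then applies H\"older to the inner $x$-integral, pairing $\|\rho^r\|_{L^{1/r}(\T)}=\|\rho\|_{L^1(\T)}^r$ against $\|\varphi_v\|_{L^{1/(1-r)}(\T)}$. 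You instead apply H\"older pointwise in $x$ to the $(y,v)$-integral, exploiting the normalization $\iint F_\alpha=1$ to get $\rho(x)^{1+r}\leq\iint\varphi_v(x-y)^{1+r}F_\alpha[\rho_0]\,\dd y\,\dd v$, and only then integrate in $x$. Both cuts produce the needed $|v|^{-r}$ singularity, but yours has a small advantage in rigor: you bound $\|\varphi_v\|_{L^{1+r}(\T)}^{1+r}\leq C_r(1+|v|^{-r})$, correctly accounting for the fact that the torus $L^p$ norm of a probability density cannot fall below one for $p\geq 1$; the paper borrows the whole-line estimate \eqref{Lpbnd} to write $\|\varphi_v\|_{L^{1/(1-r)}(\T)}\leq (2|v|)^{-r}$, which cannot hold on $\T$ for $|v|$ large. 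That gap in the paper is harmless (it only costs a constant, exactly the $+1$ you keep and absorb into $B_r$), but your version closes it cleanly. One tiny slip: the exponent $\theta=(2+r)/(2(1+r))$ ranges over $(3/4,1]$ rather than $[1/2,1)$, with $\theta=1$ exactly at $r=0$; this is immaterial since the $r=0$ case of the lemma is trivial.
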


\begin{remark}
  Note that the constant $K_{r,\alpha}$ is independent of
  $\| \rho\|_{L^{1+r/2}(\T)}$. That is, knowing only that
  $\int_\T \rho^{1+r/2}\dd x$ is finite, we obtain a universal bound
  on $\int_\T \rho^{1+r}\dd x$.
\end{remark}

\begin{proof}
  Define again $F_\alpha[\rho_0]$ in terms of $\rho_0$ using
  \eqref{Psidef} so that 
  \begin{equation}\label{plb1}
    \rho(x)= \int_{\T \times \R} \varphi_v(x-y) \, F_\alpha[\rho_0](y,v) \dd y
    \dd v.
  \end{equation}
  Multiply (\ref{plb1}) by $\rho^{r}(x)$ and integrate in $x$:
  \begin{align*}
    \int_{\T} \rho^{1+r}(x) \dd x
    & = \int_{\T} \rho^{r}(x) \left(  \int_{\T \times
      \R}\varphi_v(x-y) \, F_\alpha[\rho_0](y,v) \dd y \dd v \right) \dd x\\
    & = \int_{\T \times \R}  \left( \int_{\T}
      \varphi_v(x-y)\, \rho^{r}(x) \dd x   \right)  F_\alpha[\rho_0](y,v) \dd y \dd v.
  \end{align*}
  Equation~\eqref{Lpbnd} implies  
  \[
    \int_{\T} \varphi_v(x-y)\, \rho_0^{r}(x) \dd x \ \leq
    \|\rho_0\|^ r _{L^1(\T)}\|\varphi_v\|_{L^{\frac{1}{1-r}}(\T)} \leq \frac{1}{(2|v|)^{r}}.
  \]
  Therefore
  \begin{eqnarray*}
    \int_{\T} \rho^{1+r}(x) \dd x
    &\leq&   \int_{\T \times \R}   |v|^{-r} F_\alpha[\rho_0](y,v) \dd y \dd v\\
    &\leq&   \int_{\T \times \R}   |v|^{-r} \Big[\alpha {\mathcal M}_{\rho_0}(y,v)   + (1-\alpha)\rho_0(y) G(v)\Big] \dd y \dd v\\
    &\leq&  \alpha \int_{\T \times \R}   |v|^{-r} {\mathcal
           M}_{\rho_0}(y,v) \dd y \dd v +  \int_\R |v|^{-r}G(v)\dd v.
   \end{eqnarray*}
   Now using the definition of ${\mathcal M}_{\rho_0}$ and Lemma~\ref{conpres}: 
   \begin{equation*}
     \int_{\T \times \R}   |v|^{-r} {\mathcal M}_{\rho_0}\dd y \dd v
     \leq  \int_{\T}  \left(\int_\R  |v|^{-r}
            \frac{\rho_0^{3/2}(y)}  {\sqrt{2\pi P_\infty}} e^{- \frac{v^2\rho_0(y)}{2P_\infty}}\dd v \right)\dd y
   \end{equation*}
   and making the change of variable $w = (\rho_0(y)/P_\infty)^{1/2}v$,
   \[
     \int_\R |v|^{-r} \frac{\rho_0^{3/2}(y)} {\sqrt{2\pi P_\infty}}
     e^{- \frac{v^2\rho_0(y)}{2P_\infty}}\dd v = \rho_0(y)^{1+r/2}
     \frac{1}{\sqrt{2\pi P_\infty^r}}\int_\R |w|^{-r} e^{-\frac{|w|^2}{2}}\dd
     w.
   \]
   This yields \eqref{STU} with
   \[
     A_r := \frac{1}{\sqrt{2\pi P_\infty^r}}\int_\R |w|^{-r}
     e^{- \frac{|w|^2}{2}}\dd w \quad{\rm and}\quad B_r := \int_\R
     |v|^{-r}G(v)\dd v.
   \]

   Now suppose that $\rho = \rho_0 \in L^{1+r/2}(\T)$ is a fixed point
   of $\Psi_\alpha$. We have from the inequality and H\"older's
   inequality that
 \[ 
 \int_\T \rho^{1+r}\dd x \leq \alpha A_r \int_\T \rho^{1+r/2}\dd x +
 B_r \leq \alpha A_r \left(\int_\T \rho^{1+r}\dd
   x\right)^{\frac{1+r/2}{1+r}} + B_r.
 \]
 Since the exponent $(1+r/2)/(1+r)$ on the right is less than one,
 this proves that there is a constant $K_{r,\alpha} >0$ depending only
 on $r$ and $\alpha$ such that
 $\int_\T \rho^{1+r}\dd x \leq K_{r,\alpha}$. 
 \end{proof}

 \begin{lemma}[Pointwise upper bound]
   \label{upper}
   Let $\alpha \in [0,1]$ and let $\rho$ be any probability density on
   $\T$ that is a fixed point of $\Psi_\alpha$, and such that
   $\rho\in L^p$ for some $p>1$. Then $\rho$ satisfies the pointwise
   upper bound $\rho(x) \le R_\infty$ where $R_\infty< \infty$ only
   depends on the total energy, $\alpha$ and on $p$, and is monotone
   increasing in $\alpha$.
\end{lemma}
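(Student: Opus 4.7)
The plan is to apply Hölder's inequality to the fixed-point integral representation $\rho(x) = \int_{\T \times \R} \varphi_v(x-y) F_\alpha[\rho](y,v)\, dy\, dv$ (which holds because $\rho$ is a fixed point of $\Psi_\alpha$), using two inputs already available: the pointwise lower bound $\rho \ge r_\infty$ from Lemma~\ref{lower}, and the iterated version of the preceding integrability-gain lemma which, starting from the hypothesis $\rho \in L^p(\T)$ for some $p > 1$, promotes $\rho$ to $L^q(\T)$ for every $q < 2$, with a bound depending only on $q$ and $\alpha$ (and monotone increasing in $\alpha$). The crucial consequence of $\rho \ge r_\infty$ is the pointwise estimate
\[
\mathcal M[\rho](y,v) = \frac{\rho^{3/2}(y)}{\sqrt{2\pi P_\infty}}\, e^{-v^2 \rho(y)/(2P_\infty)} \le \frac{\rho^{3/2}(y)}{\sqrt{2\pi P_\infty}}\, e^{-c v^2}, \quad c := \frac{r_\infty}{2P_\infty},
\]
which replaces the $\rho$-dependent Gaussian factor by a uniform Gaussian decay in $v$.

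\textbf{Key estimate.} Next I would apply Hölder in the $y$-variable with exponents $p$ and $p'$:
\[
\rho(x) \le \int_\R \|\varphi_v\|_{L^{p'}(\T)}\, \|F_\alpha[\rho](\cdot,v)\|_{L^p(\T)}\, dv,
\]
and use $\|\varphi_v\|_{L^{p'}(\T)} \le C_p\, |v|^{-1/p}$ together with the splitting $F_\alpha = \alpha \mathcal M[\rho] + (1-\alpha)\rho\,G$. The bound above gives $\|\mathcal M[\rho](\cdot,v)\|_{L^p} \le C_1\, e^{-c v^2}\, \|\rho\|_{L^{3p/2}}^{3/2}$, while the linear term contributes $(1-\alpha) G(v) \|\rho\|_{L^p}$. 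Combining yields, for every $p > 1$,
\[
\rho(x) \le \alpha C_1 \|\rho\|_{L^{3p/2}}^{3/2} \int_\R \frac{e^{-c v^2}}{|v|^{1/p}}\, dv + (1-\alpha) C_2 \|\rho\|_{L^p} \int_\R \frac{G(v)}{|v|^{1/p}}\, dv,
\]
and both $v$-integrals are finite since $1/p < 1$ ensures integrability of the power singularity at $v = 0$, while the Gaussian factors handle $|v| \to \infty$.

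\textbf{Conclusion.} Finally I would choose $p \in (1, 4/3)$ so that $3p/2 < 2$. Then both $\|\rho\|_{L^p}$ and $\|\rho\|_{L^{3p/2}}$ are finite by the integrability-gain lemma, each bounded by $K_{r,\alpha}^{1/(1+r)}$ for an appropriate $r < 1$, with a constant depending only on $P_\infty$ (total energy), $\alpha$, and monotone increasing in $\alpha$. Taking the supremum in $x$ in the displayed inequality yields $\rho(x) \le R_\infty$ uniformly in $x$, with $R_\infty$ having the asserted dependencies.

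\textbf{Main obstacle.} The central difficulty is the joint singularity at $v = 0$: the Green kernel $\varphi_v$ blows up in $L^\infty$ like $1/|v|$, and at the same time $\mathcal M[\rho](y,0) = \rho^{3/2}(y)/\sqrt{2\pi P_\infty}$ is not pointwise controlled by $\rho$ alone. Neither factor can be handled on its own; only the combination of extra spatial integrability $\rho \in L^{3p/2}$ (to bound $\rho^{3/2}$ in $L^p_y$) with the Gaussian decay in $v$ (coming from the lower bound $\rho \ge r_\infty$) makes the $v$-integral of $\|\varphi_v\|_{L^{p'}}\|\mathcal M[\rho](\cdot,v)\|_{L^p}$ convergent, and this is why invoking both Lemma~\ref{lower} and the integrability-gain lemma is essential.
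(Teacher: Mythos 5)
Your proof is correct and follows essentially the same route as the paper's: iterate the integrability-gain lemma to get $\rho\in L^q$ for some $q<2$ with a universal bound, use the pointwise lower bound on $\rho$ to replace the $\rho$-dependent Gaussian in $\mathcal{M}[\rho]$ by a fixed Gaussian in $v$, and then bound the convolution with $\varphi_v$ by H\"older using $\|\varphi_v\|_{L^{p'}}\lesssim |v|^{-1/p}$. The only (cosmetic) difference is that you apply a single H\"older estimate with a general exponent $p\in(1,4/3)$ to $F_\alpha$ as a whole, whereas the paper fixes $L^{7/4}$ and uses the conjugate pairs $(7/6,7)$ and $(7/4,7/3)$ separately for the nonlinear and linear parts.
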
 

\begin{proof}
  In the case where $p \in (1,7/4)$, finitely many applications of the
  previous lemma will yield, for any $q \in (p,2)$, a bound
  \[
    \int_\T \rho^q(x)\dd x \leq C_{r,\alpha}
  \]
  for some finite constant $C_{r,\alpha}$ depending only on $r$ and
  $\alpha$. We deduce that for all $p >1$, the following control holds
  \[
    \int_{\T}\rho^{7/4}(x)\dd x \leq C_{r, \alpha}.
  \]
  
  We return to (\ref{plb1}) and expand $F_\alpha$ to write
  \begin{equation}\label{plb5}
    \rho (x)= \alpha \int_{\T \times \R} \varphi_v(x-y) \mathcal
    M_\rho(x,v) \dd y \dd v + (1-\alpha) \int_{\T \times \R}
    \varphi_v(x-y) \rho (y) G(v) \dd y \dd v.
  \end{equation}
  Observe that Lemma~\ref{lower} implies 
  \begin{equation*}
    {\mathcal M}_{\rho}(x,v) =  \frac{\rho^{3/2}(x)}  {\sqrt{2\pi
        P_\infty}}
    e^{- \frac{v^2\rho(x)}{2P_\infty}} \leq
    \frac{\rho_0^{3/2}(x)}  {\sqrt{2\pi P_\infty}}
    e^{- \frac{r_\infty v^2}{2P_\infty}}.
  \end{equation*}
  which yields 
  \begin{multline}\label{plb6}
    \int_{\T \times \R} \varphi_v(x-y) \mathcal
    M_f(x,v) \dd y \dd v
    \leq\\
    \frac{1}  {\sqrt{2\pi P_\infty}}\int_\R \exp\left[-\frac{r_\infty v^2}{2 P_\infty}\right]  
    \left( \int_{\T} \varphi_v(x-y) \rho ^{3/2}(y) \dd y \right)
    \dd v.
  \end{multline}
  We apply H\"older's inequality with conjugate exponents $p = 7/6$
  and $q = 7$ to obtain
  \begin{align*}
    \left( \int_{\T} \varphi_v(x-y) \rho ^{3/2}(y) \dd y \right)
    & \leq \|\varphi_v\|_{L^7(\T)}
      \|\rho ^{3/2}\|_{L^{7/6}(\T)} \\
    & \le \|\varphi_v\|_{L^7(\T)}
      \|\rho \|_{L^{7/4}(\T)}^{3/2} \\
    & \le |v|^{-6/7} \|\rho  \|_{L^{7/4}(\T)}^{3/2} \\
    & \le C_{r,\alpha} |v|^{-6/7}
  \end{align*}
  where in the final step we have the used estimate~\eqref{Lpbnd} with
  $p=7$.  Using this in (\ref{plb6}) and noting that
  \[
    \int_\R \exp\left[-\frac{r_\infty v^2}{2 P_\infty}\right]
    |v|^{-6/7} \dd v < \infty,
  \]
  we deduce a universal upper bound on
  $\int_{\T \times \R} \varphi_v(x-y) \mathcal M_f(x,v) \dd y \dd v$.
  
  The term
  $ \int_{\T \times \R} \varphi_v(x-y) \rho(y) G(v) \dd y \dd v$ is
  bounded using H\"older inequality and ~\eqref{Lpbnd}:
  \[
    \int_{\T} \varphi_v(x-y) \rho(y)\dd y \leq
    \|\varphi_v\|_{L^{7/3}(\T)}\|\rho\|_{L^{7/4}(\T)} \le C_{r,\alpha}
    |v|^{-4/7}.
  \]
  The two last inequalities combined with~\eqref{plb5} imply the
  pointwise bound on $\rho$.
\end{proof}

\section{The contraction mapping argument}\label{Sec3}

\subsection{Setting of the argument}
Recall the relation $\rho_f T_f = P_\infty$ and
\begin{equation*}\label{Fdef2}
  \begin{cases} \ds
    F_\alpha[\rho] = \alpha \mathcal \rho M_{\frac{T_\infty}{\rho}}
    + (1-\alpha) \rho G, \\[3mm] \ds
    G := \left( \frac{M_{T_1}+M_{T_2}}{2} \right) \\[3mm] \ds
    \Psi_\alpha[\rho] := \int_\R (1 -
    v^2\partial_x^2)^{-1}F_\alpha[\rho] \dd v.
  \end{cases}
\end{equation*}
The local density of any steady state must be a fixed point of
$\Psi_\alpha$.
When $\alpha= 0$, the map $\Psi_0$ is linear, and a consequence of the
spectral analysis of the next Section~\ref{sec:linearised} is that it is
strictly contractive in $H^1$ or $L^2$ norms. To extend it to small
positive $\alpha$, we make use of the {\it a-priori} bounds proved in
the previous Section~\ref{Sec2}.  

\subsection{The contraction estimate}

\begin{lemma}\label{CMT}
  Given any $\var \in (0,1)$, define
  \[
    \mathcal C_\var := \left\{\rho \in L^2(\T) \ :\ \int_{\T}
      \rho(x) \dd x = 1 \ \mbox{ and } \ 0<\var < \rho < 1/\var
    \right\}.
  \]

  Then $\Psi_\alpha(\mathcal C_\var) \subset L^2(\T)$ for all
  $\alpha \in [0,1]$ and there are $\alpha_\var,\delta_\var \in (0,1)$
  depending on $\var$ such that for all $\alpha \in [0,\alpha_\var)$:
  \[
    \forall \, \rho_1, \rho_2 \in \mathcal C_\var, \quad \|
    \Psi_\alpha(\rho_1) - \Psi_\alpha(\rho_2) \|_{L^2(\T)} \leq
    (1-\delta_\var) \| \rho_1 - \rho_2 \|_{L^2(\T)}.
  \]
\end{lemma}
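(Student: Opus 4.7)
The plan is to exploit the affine structure of $\Psi_\alpha$ in $\alpha$: writing
\[
\Psi_\alpha[\rho](x) = \alpha\, \mathcal{N}[\rho](x) + (1-\alpha)\, \mathcal{L}\rho(x),
\]
where $\mathcal{L}\rho(x) := \int_{\T \times \R} \varphi_v(x-y)\, G(v)\, \rho(y)\,\dd y\,\dd v$ is linear in $\rho$, and $\mathcal{N}[\rho](x) := \int_{\T \times \R} \varphi_v(x-y)\, \mathcal{M}_\rho(y,v)\, \dd y\, \dd v$ is the nonlinear part. Since both $\rho_1, \rho_2 \in \mathcal{C}_\var$ are probability densities and $\Psi_\alpha$ preserves total mass (by Lemma~\ref{fixP}), the difference $\rho_1-\rho_2$ and its image under $\Psi_\alpha$ both lie in the mean-zero subspace of $L^2(\T)$, which is where I will work. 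The inclusion $\Psi_\alpha(\mathcal{C}_\var)\subset L^2(\T)$ is immediate from $\var<\rho<1/\var$: one gets the pointwise bound $F_\alpha[\rho](y,v)\leq C_\var(1+G(v))$ and smoothing by $\varphi_v$ keeps everything in $L^\infty\subset L^2$.

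For the linear part, I would use Fourier series: the operator $\mathcal{L}$ is a convolution with the even kernel $K(x) := \int_\R \varphi_v(x) G(v)\, \dd v$, hence a Fourier multiplier with symbol
\[
\hat K(k) = \int_\R \frac{G(v)}{1+4\pi^2 v^2 k^2}\, \dd v.
\]
Clearly $\hat K(0)=1$, $|\hat K(k)|<1$ for $k\neq 0$, and $|\hat K(k)|$ is strictly decreasing in $|k|$, so $\gamma := \sup_{k\neq 0}|\hat K(k)| = \hat K(1) <1$. This gives
\[
\|\mathcal{L}(\rho_1-\rho_2)\|_{L^2(\T)} \leq \gamma\, \|\rho_1-\rho_2\|_{L^2(\T)},
\]
i.e. $\mathcal{L}$ is a strict contraction on the mean-zero subspace (this is the quantitative incarnation of the spectral statement from Section~\ref{sec:linearised}).

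For the nonlinear part, I would write $\mathcal{M}_\rho(x,v) = \mu(\rho(x),v)$ with $\mu(s,v) = \frac{s^{3/2}}{\sqrt{2\pi P_\infty}} e^{-v^2 s/(2P_\infty)}$ and note that on $s\in[\var, 1/\var]$ one has
\[
|\partial_s \mu(s,v)| \leq C_\var (1+v^2)\, e^{-v^2 \var/(2P_\infty)}.
\]
Hence the pointwise Lipschitz estimate
\[
|\mathcal{M}_{\rho_1}(y,v) - \mathcal{M}_{\rho_2}(y,v)| \leq C_\var (1+v^2)\, e^{-v^2 \var/(2P_\infty)}\, |\rho_1(y)-\rho_2(y)|
\]
holds for $\rho_1,\rho_2 \in \mathcal{C}_\var$. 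Since $\varphi_v \geq 0$ and $\|\varphi_v\|_{L^1(\T)} = 1$, Young's convolution inequality gives
\[
\|\mathcal{N}[\rho_1]-\mathcal{N}[\rho_2]\|_{L^2(\T)} \leq K_\var\, \|\rho_1-\rho_2\|_{L^2(\T)}, \qquad K_\var := C_\var\! \int_\R (1+v^2) e^{-v^2\var/(2P_\infty)}\, \dd v < \infty.
\]

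Combining the two estimates, for $\rho_1,\rho_2\in\mathcal{C}_\var$,
\[
\|\Psi_\alpha(\rho_1) - \Psi_\alpha(\rho_2)\|_{L^2(\T)} \leq \bigl[\alpha K_\var + (1-\alpha)\gamma\bigr]\, \|\rho_1-\rho_2\|_{L^2(\T)}.
\]
Setting $\alpha_\var := \min\!\bigl(1, \tfrac{1-\gamma}{2(K_\var + 1 -\gamma)}\bigr)$ makes the bracket at most $1 - \delta_\var$ for some $\delta_\var>0$ for all $\alpha\in[0,\alpha_\var)$, which is the desired contraction. The main obstacle is that the Lipschitz constant $K_\var$ of the nonlinear part blows up as $\var\to 0$ (and so does $1/\alpha_\var$), reflecting the fact that the Maxwellian $\mathcal{M}_\rho$ degenerates when $\rho$ is very small; this is precisely why $\alpha_\var$ cannot be chosen independently of $\var$, and why the uniform pointwise bounds from Section~\ref{Sec2} must be combined with this lemma to yield the global uniqueness result.
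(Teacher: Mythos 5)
Your proof is correct and follows essentially the same strategy as the paper's: decompose $\Psi_\alpha$ into its $\alpha$-weighted nonlinear part and $(1-\alpha)$-weighted linear part, bound the Lipschitz constant of the nonlinear part using the $\var$-dependent pointwise controls, and show via Fourier that the linear part is a strict contraction on the mean-zero subspace because the zero mode drops out. The only cosmetic difference is that the paper computes the functional derivative $D\Psi_\alpha[\bar\rho]$, bounds its operator norm, and then applies the mean-value theorem over the convex set $\mathcal C_\var$, whereas you bound $\partial_s\mu(s,v)$ directly and deduce the Lipschitz estimate without passing through the derivative of the full map; these are interchangeable formulations of the same estimate.

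One small inaccuracy worth flagging, although it does not affect the argument: the pointwise bound $F_\alpha[\rho](y,v)\le C_\var(1+G(v))$ is not quite right, since the Maxwellian $\mathcal M_\rho$ has variance $P_\infty/\rho$, which for $\rho$ near $\var$ gives a slower Gaussian decay than $G$; the correct dominating function is of the form $C_\var\bigl(e^{-\var v^2/(2P_\infty)}+G(v)\bigr)$, which is still $L^1$ in $v$, so the conclusion $\Psi_\alpha(\mathcal C_\var)\subset L^\infty(\T)\subset L^2(\T)$ stands, via $\int_\T\varphi_v(x-y)\,\dd y=1$ rather than any smoothing.
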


\begin{proof}
  Recall that in our normalization the global conserved quantities
  satisfy
  \[
    \rho_\infty := \int_{\T \times \R} f(x,v) \dd x \dd v = 1, \quad
    T_\infty:= \frac{1}{\rho_\infty} \int_{\T \times \R} v^2 f(x,v) \dd x \dd v =
    \frac{T_1+T_2}{2}.
  \]

  The fact that $\Psi_\alpha(\mathcal C_\var) \subset L^2(\T)$ is
  straightforward and we only prove the contraction property. We
  linearize the map $\Psi_\alpha$ around a profile
  $\bar \rho \in \mathcal C_\var$ with global mass $1$ and global
  temperature $T_\infty$. The local temperature is
  $\bar T(x) = T_\infty/ \bar \rho(x)$. 
  We write the fluctuation
  \[
    \rho = \bar \rho + \sigma \ \mbox{ with } \ \sigma \in L^2(\T)
    \mbox{ and } \int_{\T} \sigma(x) \dd x =0.
  \]
  
  The functional derivative of $\Psi_\alpha$ is: 
\begin{align*}
  D\Psi_\alpha[\bar \rho] \cdot \sigma
  & = \alpha \int_\R \left[ 1 - v^2 \partial^2_x
    \right]^{-1} \left( \frac{3\sigma}{2} M_{\frac{T_\infty}{\bar \rho}}\sigma
    - \frac{v^2 \bar \rho \sigma }{2T_\infty} M_{\frac{T_\infty}{\bar \rho}}  \right) \dd v \\
  & \quad + (1-\alpha) \int_\R \left[ 1 - v^2 \partial^2_x
    \right]^{-1} \sigma \, G(v) \dd v.
\end{align*}
We estimate by duality for $\sigma_1, \sigma_2 \in L^2(\T)$:
\begin{align*}
  \left\langle \sigma_2, D\Psi_\alpha[\bar \rho] \cdot \sigma_1 \right\rangle
  & = \alpha  \int_{\T \times \R} \left( \left[ 1 - v^2 \partial^2_x
    \right]^{-1} \sigma_2 \right) \left( \frac{3\sigma_1}{2}
    M_{\frac{T_\infty}{\bar \rho}} - \frac{v^2\bar \rho \sigma_1}{2T_\infty}
    M_{\frac{T_\infty}{\bar \rho}} \right) \dd x \dd v \\ 
  & \quad +  (1-\alpha) \int_{\T \times \R} \left( \left[ 1 - v^2 \partial^2_x
    \right]^{-1} \sigma_2 \right) G(v) \sigma_1(x) \dd x \dd v\\
    & =: D_1 + D_2. 
\end{align*}
\sk

Let us study the first term $D_1$. Using the controls
$\bar \rho \in \mathcal C_\var$ we deduce
\begin{equation*}
  \left[ \int_{\T} \left( \frac{3}{2} M_{\frac{T_\infty}{\bar \rho}} +
      \frac{v^2\bar \rho}{2T_\infty}
      M_{\frac{T_\infty}{\bar \rho}}  \right)^2 \sigma(x)^2 \dd x
  \right]^{1/2} \le C_\var e^{-
    \nu_\var v^2} \| \sigma \|_{L^2(\T)}
\end{equation*}
for some constant $C_\var, \nu_\var >0$ depending on $\var$, and therefore 
\begin{align*}
  & \int_{\T \times \R} \left( \left[ 1 - v^2 \partial^2_x
  \right]^{-1} \sigma_2 \right) \left( \frac{3}{2}
    M_{\frac{T_\infty}{\bar \rho}} - \frac{v^2\bar \rho}{2T_\infty}
    M_{\frac{T_\infty}{\bar \rho}} \right) \sigma_1(x) \dd x \dd v \\
  & \quad \le C_\var \| \sigma_1 \|_{L^2(\T)} \int_\R \left( \int_{\T} \left( \left[ 1 - v^2 \partial^2_x
  \right]^{-1} \sigma_2 \right)^2 \dd x \right)^{1/2} e^{-
  \nu_\var v^2} \dd v \\ 
  & \quad \le C_\var \| \sigma_1 \|_{L^2(\T)} \int_\R \left(
    \int_{\T} \sigma_2 ^2 \dd x \right)^{1/2}  e^{-
    \nu_\var v^2} \dd v \\
  & \quad \le C_\var \| \sigma_1 \|_{L^2(\T)} \| \sigma_2 \|_{L^2(\T)}
\end{align*}
where we have used that $\|\varphi_v\|_{L^1(\T)} =1$. We conclude
that
\begin{equation*}
  D_1 \le \alpha C_\var \| \sigma_1 \|_{L^2(\T)} \| \sigma_2 \|_{L^2(\T)}.
\end{equation*}
\sk

Let us now study the second term $D_2$. We Fourier transform it in
$x$:
\begin{equation*}
  \int_{\T \times \R} \left( \left[ 1 - v^2 \partial^2_x
    \right]^{-1} \sigma_2 \right) G(v) \sigma_1(x) \dd x \dd v 
  =  \sum_{k \in \Z} \int_\R \frac{\hat \sigma_2(k)}{\left[ 1 + (2\pi)^2 v^2 k^2 \right]}
  G(v) \hat \sigma_1(k) \dd v.
\end{equation*}
The perturbation $\sigma$ has zero mass, hence $\hat \sigma(0) = 0$
and
\begin{align*}
  \left| \int_{\T \times \R} \left( \left[ 1 - v^2 \partial^2_x
  \right]^{-1} \sigma_2 \right) G(v) \sigma_1(x) \dd x \dd v
  \right| & \le \sum_{k \in \Z,  \ k \not =0} \int_\R \frac{\hat
            \sigma_2 (k)}{\left[ 1 + (2\pi)^2v^2 k^2 \right]} 
            G(v) \hat \sigma_1(k) \dd v \\ 
          & \le \left( \int_\R \frac{G(v)}{1+(2\pi)^2 v^2} \dd v \right) \sum_{k
            \in \Z, \ k \not = 0} \hat \sigma_1(k) \hat \sigma_2 (k) \\ 
          & \le (1-\delta_G) \| \sigma_1 \|_{L^2(\T)} \| \sigma_2\|_{L^2(\T)}
\end{align*}
with $\delta_G \in (0,1)$, where we have used that
\begin{equation*}
  \left( \int_\R \frac{G(v)}{1+(2\pi)^2 v^2} \dd v \right) < \left(
    \int_\R G(v) \dd v \right) =1.
\end{equation*}

Therefore the operator $D \Psi_\alpha[\bar \rho]$ is bounded
from $L^2(\T) \to L^2(\T)$ with the operator norm bounded by 
\begin{equation*}
  ||| D \Psi_\alpha[\bar \rho] ||| = \sup_{\| \sigma_1\|_{L^2(\T)} \le
    1, \ \| \sigma_2 \|_{L^2(\T)} \le 1} \left\langle \sigma_2,
    D\Psi_\alpha[\bar \rho] \cdot \sigma_1 \right\rangle \le \alpha C_\var + (1-\alpha) (1-\delta_G).
\end{equation*}
For $\alpha$ small enough we deduce
$||| D\Psi_\alpha[\rho] ||| < (1-\delta_\var)$ with
$\delta_\var \in (0,1)$. Finally, since the set $\mathcal C_\var$ is
convex, the mean value theorem gives, for
$\rho_1, \rho_2 \in \mathcal C_\var$:
\begin{align*}
  \| \Psi_\alpha(\rho_1) - \Psi_\alpha(\rho_2) \|_{L^2(\T)} 
  & = \left\| \int_0 ^1
    D\Psi_\alpha[(1-t)\rho_1 + t \rho_2] \cdot (\rho_2-\rho_1) \dd t
    \right\|_{L^2(\T)} \\
  & \le  (1-\delta_\var) \| \rho_1 - \rho_2 \|_{L^2(\T)}
\end{align*}
which shows the contraction property for the nonlinear map. 
\end{proof}

\subsection{Proof of the main Theorem~\ref{main}}

We now prove Theorem~\ref{main}. By Lemma~\ref{lower} and
Lemma~\ref{upper}, there is a $\var> 0$, depending only on $T_1$ and
$T_2$ such that every steady state of \eqref{bgkthermo} with
$\alpha < 1/2$ belongs to the set $\mathcal C_\var$, defined in
Lemma~\ref{CMT}. Then by Lemma~\ref{CMT}, there is an $\alpha_0$ with
$0 < \alpha_0 \leq 1/2$ such that $\Psi_\alpha$ is a strictly
contractive mapping on this convex set $\mathcal C_\var$, and hence
there is a unique fixed point in $\mathcal C_\var$. Since there is
always one spatially uniform steady state, it is the unique steady
state.

\section{Perturbative stability of the spatially uniform NESS}
\label{sec:linearised}

\subsection{Strategy}
In this section we investigate the perturbative stability of the
spatially homogeneous NESS
\begin{equation}\label{finfa}
  f_{\infty,\alpha} := \alpha M_{T_\infty} + (1-\alpha) G(v), \quad 
  G(v) := \frac{M_{T_1}+M_{T_2}}{2}, \quad T_\infty := \frac{T_1+T_2}{2}.
\end{equation}

When the reservoirs have different temperatures and are coupled to the
system, that is for $\alpha \in (0,1)$, there is transfer of heat
through collisions from the hot reservoir to the cold reservoir, and
there is no detailed balance; i.e., time reversal invariance is broken
in the steady state. This is reflected in the fact that the linearized
operator is a non-self-adjoint operator on
$\H_\alpha =L^2(f_{\infty,\alpha}^{-1})$ for $\alpha \in (0,1)$, as we shall
see.

Nonetheless, we shall prove that the linearized collision operator
still satisfies a \emph{microscopic coercivity} inequality
(see~\eqref{diss1}), expressing the dissipative nature of the linearized evolution on the orthogonal  complement of the null space of the generator. This fact is striking since we do not derive it, through
linearization, from a nonlinear entropy principle, which is the usual source of such inequalities.  In our non-equilibrium setting, there is no analog of the $H$-Theorem, and therefore we must prove it
by direct analysis of linearized collision operator.

Once the microscopic coercivity is proven, we can prove that our
system is {\it hypocoercive} by a variety of methods, and we briefly describe two of these.  
\subsection{Linearization around a spatially homogeneous NESS}

Consider densities $f$ that are close to $f_{\infty,\alpha}$ with
fluctuations denoted
\begin{equation}\label{hdef}
  \begin{cases}\ds
    h(x,v) & := f(x,v) - f_{\infty,\alpha}(v) \\[3mm]
    \rho(x) & \ds = 1 + \sigma(x) \qquad{\rm with}\qquad \sigma(x):=\int_\R h(x,v) \dd v,\\[3mm]
    P(x) & \ds = T_\infty + \tau(x) \qquad{\rm with}\qquad \tau(x):=\int_\R
    v^2h(x,v) \dd v,
  \end{cases}
\end{equation}
The fluctuations of the local density and local pressure have zero
mean:
\begin{equation}\label{mloc2}
  \int_{\T} \sigma(x) \dd x =0  \qquad{\rm and}\qquad   \int_{\T}
  \tau(x) \dd x =0.
\end{equation}

Consider the weighted $L^2$ Hilbert space of {\em real valued functions}  defined by the norm
\begin{equation}\label{normdef}
  \|h\|_{\H_\alpha}^2 = \int_{\T\times \R} |h(x,v)|^2
  \frac{1}{f_{\infty,\alpha}(v)} \dd x \dd v.
\end{equation}

We expand $\M_f - \M_{f_{\infty,\alpha}}$ to first order in terms of
$h$, $\sigma$, $\tau$: 
 \begin{align}\label{taylor}
   & \M_f(x,v) - \M_{f_{\infty,\alpha}}(v)
   = \frac{(1+\sigma)^{3/2}(x)}{\sqrt{2\pi
     (T_\infty+\tau(x))}}e^{-\frac{v^2(1+\sigma(x))}{2(T_\infty+\tau(x))}}
     - \frac{1}{\sqrt{2\pi T_\infty}}e^{-\frac{v^2}{2T_\infty}}\nonumber\\
   &\qquad \approx \left( \frac32 - \frac{v^2}{2T_\infty}\right)M_{T_\infty}(v)\sigma(x)
     + \left( -\frac{1}{2T_\infty}+ \frac{v^2}{2T_\infty^2}\right)M_{T_\infty}(v)\tau(x) \nonumber\\
   &\qquad = M_{T_\infty}(v)\sigma(x)
     + \frac{1}{2}  \left(\frac {v^2}{T_\infty} - 1\right)
     M_{T_\infty}(v)\left(\frac{1}{T_\infty} \tau(x) - \sigma(x)\right).
 \end{align}

 The fluctuation $h = f - f_{\infty,\alpha}$ satisfies the equation
 \begin{equation}\label{bgktherB}
   \partial_t h + v \partial_x h =  \alpha(\mathcal \M_f - \M_{f_{\infty,\alpha}}) +(1-\alpha)\sigma G  - h.
 \end{equation}
 To first order we obtain the linearized equation
 \begin{equation}
   \label{eq:linh}
   \partial_t h + \mathcal S h = \mathcal L_\alpha h
 \end{equation}
 with the free streaming operator $\mathcal S := v \partial_x$ and
 the linearized collision operator
 \begin{equation}\label{BGKlin}
   \cL_\alpha h(x,v) := \sigma(x)f_{\infty,\alpha}(x,v)
   + \frac{\alpha}{2} \left(\frac{\tau(x)}{T_\infty}  - \sigma(x)\right)
   \left(\frac {v^2}{T_\infty} - 1\right)  M_{T_\infty}(v) - h(x,v).
 \end{equation}
 
 Note that both $\cL_\alpha$ and $\H_\alpha$ depend on $\alpha$ and
 that $\cL_\alpha$ is bounded on $\H_\alpha$ for all
 $\alpha \in [0,1]$: observe that $f_{\infty,\alpha} \ge \alpha
 M_{T_\infty}$ and therefore
 \begin{align*}
   \left\| \cL_\alpha h \right\|_{\cH_\alpha} ^2
   & \lesssim \| \sigma
     \|_{L^2(\T)}^2 + \alpha \left\| \frac{\tau}{T_\infty}
     -\sigma\right\|^2 _{L^2(\T)} \left( \int_\R \left(\frac
     {v^2}{T_\infty} - 1\right)^2
     \frac{M_{T_\infty}(v)^2}{f_{\infty,\alpha}} \dd v \right)\\
   & \lesssim \| \sigma
     \|_{L^2(\T)}^2 + \left\| \frac{\tau}{T_\infty}
     -\sigma\right\|^2 _{L^2(\T)} \left( \int_\R \left(\frac
     {v^2}{T_\infty} - 1\right)^2
     M_{T_\infty}(v) \dd v \right)\\
    & \lesssim \| \sigma
      \|_{L^2(\T)}^2 + \frac{1}{T_\infty} \| \tau \|_{L^2(\T)} ^2  
    \lesssim \max\left\{ 1 \ ; \, \frac{1}{T_\infty} \right\} \, \| h
     \|_{\cH_\alpha} ^2. 
 \end{align*}

\subsection{Microscopic coercivity}

We shall now prove that the null space of $\cL$ is the space of
functions $\sigma(x) f_{\infty,\alpha}(v)$ for $\sigma \in L^2(\T)$
and prove a spectral gap on the orthogonal of this null space.

\begin{lemma}\label{mco}
  Let $\alpha \in [0,1]$ and $\cL_\alpha$ defined as in
  \eqref{BGKlin}. Then for all $h\in \cH_\alpha$,
  \begin{equation}\label{diss1}
   \langle h, \cL_\alpha h\rangle_\H  \leq
    -\frac{1-\alpha}{2} \int_{\T\times \R} \left| h(x,v) - f_{\infty,\alpha}(v)
      \int_\R h(x,w)\dd w \right|^2
    \frac{1}{f_{\infty,\alpha}(v)} \dd x \dd v  \ .
\end{equation}
\end{lemma}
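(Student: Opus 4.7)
The plan is to compute $\langle h, \cL_\alpha h\rangle_{\cH_\alpha}$ via a decomposition based on the kernel structure of $\cL_\alpha$, and then reduce everything to a single scalar estimate on $\T$ that produces the advertised $(1-\alpha)/2$ gap constant. First I would identify the $\cH_\alpha$-orthogonal projection onto the kernel of $\cL_\alpha$: since $f_{\infty,\alpha}$ is a unit vector in the fiber $L^2(f_{\infty,\alpha}^{-1}\dd v)$ (because $\int f_{\infty,\alpha}\dd v = 1$), the projection is
\[
\Pi h(x,v) = f_{\infty,\alpha}(v)\int_\R h(x,w)\dd w = \sigma(x)\,f_{\infty,\alpha}(v),
\]
and I set $h^\perp := h - \Pi h$. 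Using $\int v^2 f_{\infty,\alpha}\dd v = T_\infty$, the temperature fluctuation can be rewritten as
\[
a(x) := \frac{\tau(x)}{T_\infty}-\sigma(x) = \int_\R h^\perp(x,v)\,\psi(v)\dd v,\qquad \psi(v):=\frac{v^2}{T_\infty}-1,
\]
so it depends only on $h^\perp$, and the linearized operator simplifies to
\[
\cL_\alpha h = -h^\perp + \frac{\alpha\,a(x)}{2}\,\psi(v)\,M_{T_\infty}(v).
\]

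Next, pairing with $h$ in $\cH_\alpha$ and using that $\Pi h$ lies in the kernel of $\cL_\alpha$ while $\psi M_{T_\infty}$ is $\cH_\alpha$-orthogonal to $f_{\infty,\alpha}$ (because $\int \psi M_{T_\infty}\dd v = 0$), one obtains the identity
\[
\langle h,\cL_\alpha h\rangle_{\cH_\alpha} = -\|h^\perp\|^2_{\cH_\alpha} + \frac{\alpha}{2}\int_\T a(x)\,b(x)\dd x,\qquad b(x):=\int_\R h^\perp(x,v)\,\frac{\psi(v)\,M_{T_\infty}(v)}{f_{\infty,\alpha}(v)}\dd v.
\]
The claim is therefore equivalent to the scalar bound $\alpha\int_\T a\,b\dd x \le (1+\alpha)\,\|h^\perp\|^2_{\cH_\alpha}$.

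The crux of the argument is the pointwise identity $\alpha M_{T_\infty}/f_{\infty,\alpha} = 1 - (1-\alpha)G/f_{\infty,\alpha}$, which produces $\alpha\,b(x) = a(x) - (1-\alpha)\,e(x)$ with $e(x) := \int_\R h^\perp\,\psi\,G/f_{\infty,\alpha}\dd v$, and hence
\[
\alpha\int_\T a\,b\dd x = \int_\T a^2\dd x - (1-\alpha)\int_\T a\,e\dd x.
\]
Cauchy--Schwarz in $v$ bounds $\int a^2\dd x \le A\,\|h^\perp\|^2$ and $\int e^2\dd x \le E\,\|h^\perp\|^2$ with $A := \int\psi^2 f_{\infty,\alpha}\dd v$ and $E := \int\psi^2 G^2/f_{\infty,\alpha}\dd v$ (the latter finite since $f_{\infty,\alpha}\ge(1-\alpha)G$); a Young inequality in $x$ with a parameter tuned to $\alpha$ then absorbs the cross term into the two coercive quantities. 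The argument is natural at both endpoints: at $\alpha=0$ one has $e = a$ and the two integrals cancel, giving the stronger $\langle h,\cL_0 h\rangle_{\cH_0} = -\|h^\perp\|^2$; at $\alpha=1$ the cross-term vanishes and one uses sharply $A = \int\psi^2 M_{T_\infty}\dd v = 2$ together with $\int a^2 \le 2\|h^\perp\|^2$ to recover $\langle h,\cL_1 h\rangle_{\cH_1}\le 0$. The main obstacle is to interpolate between these two limits and tune the Young parameter so that the explicit $(1-\alpha)$ prefactor on the cross term yields precisely the spectral-gap constant $(1-\alpha)/2$ uniformly in $\alpha\in[0,1]$.
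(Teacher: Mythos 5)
The proposal takes a genuinely different route from the paper up to a point, and the first half is correct and rather clean: the $\cH_\alpha$-orthogonal projection onto the kernel of $\cL_\alpha$, the identity $\cL_\alpha h = -h^\perp + \tfrac{\alpha a}{2}\psi M_{T_\infty}$, the reduction to $\langle h,\cL_\alpha h\rangle_{\cH_\alpha} = -\|h^\perp\|^2 + \tfrac{\alpha}{2}\int_\T ab$, and the partial-fraction identity $\alpha M_{T_\infty}/f_{\infty,\alpha} = 1 - (1-\alpha)G/f_{\infty,\alpha}$ giving $\alpha b = a - (1-\alpha)e$ are all correct. Your endpoint checks ($e=a$ at $\alpha=0$; cross term vanishes and $A=2$ at $\alpha=1$) are also correct. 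This is a nice reformulation: the paper instead works with the orthonormal triple $H_0,H_1,H_2$ and decomposes $\cU_\alpha = H_2 M_{T_\infty}/f_{\infty,\alpha}$ via Pythagoras, whereas you package the same information through the pointwise identity.

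But the last step, which you yourself flag as ``the main obstacle,'' is a genuine gap, and the proposed Cauchy--Schwarz-plus-Young strategy cannot close it. The reduced claim is $\tfrac12\int_\T a^2 - \tfrac{1-\alpha}{2}\int_\T ae \le \tfrac{1+\alpha}{2}\|h^\perp\|^2$. Your diagonal bound is $\int a^2 \le A\,\|h^\perp\|^2$ with $A = \int_\R \psi^2 f_{\infty,\alpha}\,\dd v = c_\alpha^{-2}$, and one always has $A \ge 2$ (with equality only if $T_1=T_2$ or $\alpha=1$). So already $\tfrac12\int a^2$ can be as large as $\|h^\perp\|^2 > \tfrac{1+\alpha}{2}\|h^\perp\|^2$ for any $\alpha<1$. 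The cross term $\int ae$ has no sign: $a$ and $e$ are projections of $h^\perp$ onto the \emph{distinct} directions $\psi f_{\infty,\alpha}$ and $\psi G$, so one cannot use $-\tfrac{1-\alpha}{2}\int ae$ as a signed corrector. Estimating $|\int ae| \le \sqrt{AE}\,\|h^\perp\|^2$ and applying Young only \emph{adds} a positive contribution and makes things worse. Checking the two endpoints does not interpolate because the transition is governed by a two-parameter quadratic form, not by a single scalar inequality.

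What is missing is the further orthogonal splitting $h^\perp = \Pi_2 h + \Pi^\perp h$ (component along $H_2$ versus its orthocomplement), which the paper uses. With that split, $a$ depends only on $\Pi_2 h$, and one has the \emph{exact} cancellation $\tfrac12\int a^2 - \tfrac{1-\alpha}{2}\,(\text{diagonal part of }\int ae) = \alpha\,\|\Pi_2 h\|^2$, leaving only a genuine cross term between $\|\Pi_2 h\|$ and $\|\Pi^\perp h\|$ (the off-diagonal part of $\int ae$), which Cauchy--Schwarz bounds by $\sqrt{1-\alpha^2}\,\|\Pi_2 h\|\,\|\Pi^\perp h\|$ using $f_{\infty,\alpha}\ge(1-\alpha)G$ (equivalently, $f_{\infty,\alpha}\ge\alpha M_{T_\infty}$ in the paper's $\cU_\alpha$ computation). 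One is then left with a $2\times2$ quadratic form whose top eigenvalue is exactly $(\alpha-1)/2$. Your identity $\alpha b = a - (1-\alpha)e$ is a clean way to discover the same algebra, but it does not by itself replace this finer decomposition: keeping everything lumped into $\|h^\perp\|^2$ loses the information that makes the $\alpha\,\|\Pi_2 h\|^2$ term appear, and the argument does not close.
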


\begin{remark} If we had taken $\H_\alpha$ to consist if complex valued functions, 
we would need a real part on the left side of the inequality since ${\mathcal L}_\alpha$ is not self-adjoint. 
  Note that for $\alpha\in [0,1)$, the constant $\lambda_\alpha$ is
  strictly positive but $\lambda_\alpha \to 0$ as $\alpha \to 1$. This
  reflects the fact, see the proof below, that the dissipativity in
  the energy mode is lost because there is energy conservation in this
  limit. In fact in this limit case the microscopic coercivity
  nevertheless holds once accounting for the larger null space of
  $\cL_1$. We are not concerned in this case for which the NESS is
  already known, and we refer to~\cite{AAC} for a study of the
  microscopic coercivity and hypocoercivity for the equation of this
  limit case; This could also be deduced from  the abstract results
  in~\cite{MN}. 
\end{remark}
 
\begin{proof}
  Let us define the following orthonormal family in
  $L^2(f_{\infty,\alpha}^{-1} {\rm d}v)$:
  \begin{equation}\label{lowh}
    H_0(v) := f_{\infty,\alpha}, \quad H_1(v) :=
    \frac{1}{\sqrt{T_\infty}}v f_{\infty,\alpha},
    \quad H_{2}(v) := c_{\alpha}\left(\frac{v^2}{T_\infty} -
      1\right) f_{\infty,\alpha},
  \end{equation}
  where $c_\alpha >0$ is the normalizing constant so that
  $\|H_2\|_{\cH_\alpha}=1$ (one can check that
  $c_{\alpha}^{-2} = 3\left(\alpha + (1-\alpha) \left(2 -
      \frac{T_1T_2}{T_\infty^2}\right)\right)-1$). We also define the
  corresponding orthogonal projections $\Pi_0$, $\Pi_1$, $\Pi_{2}$ in
  $L^2(f_{\infty,\alpha}^{-1} {\rm d}v)$ (note that they all depend on
  $\alpha$):
  \begin{equation}
    \label{eq:projection}
    \begin{cases} \ds
      \Pi_0(h)(v) := \left( \int_\R h(w) \dd w\right) H_0(v), \\[3mm]
      \ds \Pi_1(h)(v)
      := \left( \int_\R h(w) \frac{w}{\sqrt{T_\infty}} \dd w \right) H_1(v),\\[3mm]
      \ds \Pi_2(h)(v)
      := \left( \int_\R h(w) c_\alpha \left( \frac{w^2}{T_\infty} -1 \right)
        \dd w \right) H_2(v) = \left( \frac{\tau}{T_\infty} - \sigma
      \right) c_\alpha H_2(v).
    \end{cases}
  \end{equation}
  Finally we denote $\Pi^\bot$ the orthogonal projection on
  $\{ H_0, H_2 \}^\bot$ (note that this projection includes $\Pi_1$ in
  its range).

  The linearized collision operator $\cL_\alpha$ can be written using this notation a
  \begin{equation}\label{eq:lin-proj}
    \cL_\alpha h = \Pi_0(h) 
    + \frac{\alpha}{2 c^2_{\alpha}} \Pi_2(h)
    \frac{M_{T_\infty}}{f_{\infty,\alpha}} -h. 
  \end{equation}

  We then compute the Dirichlet form
  \begin{align*}
    \langle \cL_\alpha h, h \rangle_{\cH_\alpha} =
    & \| \Pi_0(h) \|^2
      _{\cH_\alpha} + \frac{\alpha}{2 c^2_{\alpha}} \left\langle  \Pi_2(h)
      \frac{M_{T_\infty}}{f_{\infty,\alpha}}, \Pi_2(h)
      \right\rangle_{\cH_\alpha} \\
    & + \frac{\alpha}{2 c^2_{\alpha}} \left\langle  \Pi_2(h)
      \frac{M_{T_\infty}}{f_{\infty,\alpha}}, \Pi^\bot(h)
      \right\rangle_{\cH_\alpha} - \| h \|_{\cH_\alpha} ^2
  \end{align*}
  where we have used that $H_2 M_{T_\infty} f_{\infty,\alpha}^{-1}$ is
  orthogonal to $H_0$ and $H_1$ in $L^2(f_{\infty,\alpha}^{-1} {\rm
    d}v)$. 

  Let us define
  \begin{equation}\label{Uadef}
    \cU_\alpha(v) := H_2(v) \frac{M_{T_\infty}(v)}{f_{\infty,\alpha}(v)}. 
  \end{equation}
  The projection of this function on $H_0$ and $H_1$ in
  $L^2(f_{\infty,\alpha}^{-1} {\rm d}v)$ is zero and its projection on
  $H_2$ has coefficient
  \begin{equation*}
    \int_\R \cU_\alpha(v) H_2(v) \frac{1}{f_{\infty,\alpha}(v)} \dd v
    = c_\alpha^2 \int_\R \left( \frac{v^2}{T_\infty} - 1 \right) ^2
    M_{T_\infty}(v) \dd v = 2 c_\alpha ^2.
  \end{equation*}
  Its norm satisfies
  \begin{align*}
    \int_\R \cU_\alpha(v)^2 \frac{1}{f_{\infty,\alpha}(v)} \dd v
    & = c_\alpha^2 \int_\R \left( \frac{v^2}{T_\infty} - 1 \right) ^2
      \frac{M_{T_\infty}(v)^2}{f_{\infty,\alpha}(v)} \dd v \\
    & \le \frac{c_\alpha^2}{\alpha} \int_\R \left( \frac{v^2}{T_\infty} - 1 \right) ^2
      M_{T_\infty}(v) \dd v \le \frac{2 c_\alpha^2}{\alpha}
  \end{align*}
  where we have used in the last line
  $f_{\infty,\alpha} \ge \alpha M_{T_\infty}$. We then decompose
  orthogonally $\cU_\alpha = \Pi_2(\cU_\alpha) + \Pi^\bot(\cU_\alpha)$
  and deduce by Pythagoras' theorem that
  \begin{equation}\label{eq:Ubot}
    \left\| \Pi^\bot (\cU_\alpha)
    \right\|_{L^2(f_{\infty,\alpha}^{-1})} ^2 \le 4 c_\alpha^4
    \frac{1-\alpha^2}{\alpha^2}. 
  \end{equation}

  We deduce on the one hand that 
  \begin{align*}
     \frac{\alpha}{2 c^2_{\alpha}} \left\langle  \Pi_2(h)
    \frac{M_{T_\infty}}{f_{\infty,\alpha}}, \Pi_2(h)
      \right\rangle_{\cH_\alpha} 
    & = \frac{\alpha}{2} \int_{\T \times \R}  \left( \frac{\tau}{T_\infty} - \sigma
    \right)^2 \cU_\alpha(v) H_2(v) \dd x \dd v \\
    & = \alpha c_\alpha^2  \int_{\T}  \left( \frac{\tau}{T_\infty} - \sigma
    \right)^2 \dd x = \alpha  \| \Pi_2(h) \|_{\cH_\alpha}^2. 
  \end{align*}

  We deduce on the other hand
  \begin{align*}
    & \frac{\alpha}{2 c^2_{\alpha}} \left\langle  \Pi_2(h)
      \frac{M_{T_\infty}}{f_{\infty,\alpha}}, \Pi^\bot(h)
    \right\rangle_{\cH_\alpha} \\
    & = \frac{\alpha}{2c_\alpha} \int_{\T \times \R}  \left( \frac{\tau}{T_\infty} - \sigma
    \right) \Pi^\bot(\cU_\alpha) \Pi^\bot(h) \dd x \dd v\\
    & \le \frac{\alpha}{2 c_\alpha} \left\| \Pi^\bot (\cU_\alpha)
    \right\|_{L^2(f_{\infty,\alpha}^{-1})}  \left( \int_\T \left( \frac{\tau}{T_\infty} - \sigma
    \right)^2 \dd x \right)^{1/2}
      \| \Pi^\bot(h) \|_{\cH_\alpha} \\
    & \le \frac{\alpha}{2 c_\alpha^2} \left\| \Pi^\bot (\cU_\alpha)
    \right\|_{L^2(f_{\infty,\alpha}^{-1})} \| \Pi_2(h) \|_{\cH_\alpha}
      \| \Pi^\bot(h) \|_{\cH_\alpha} \\
    & \le (1-\alpha^2)^{1/2} \| \Pi_2(h) \|_{\cH_\alpha}
      \| \Pi^\bot(h) \|_{\cH_\alpha}
  \end{align*}
  where we have used~\eqref{eq:Ubot} in the last line.

  We therefore obtain
    \begin{equation*}
      \langle \cL_\alpha h, h \rangle_{\cH_\alpha} \le
      (\alpha-1) \| \Pi_2(h) \|_{\cH_\alpha}^2 + (1-\alpha^2)^{1/2} \|  \Pi_2(h)
      \|_{\cH_\alpha} \| \Pi^\bot(h)
      \|_{\cH_\alpha} - \| \Pi^\bot h \|_{\cH_\alpha} ^2.
    \end{equation*}
    The quadratic form on the right hand side is negative for
    $\alpha \in [0,1)$ since then $(1-\alpha^2) < 4
    (1-\alpha)^2$. It degenerates at $\alpha=1$. The matrix of the
    quadratic form
    \[
      (V_1,V_2) \in \R^2 \mapsto (\alpha-1)V_1 - V_2 + (1-\alpha^2)^{1/2} V_1 V_2
    \]
    has characteristic polynomial
    \[
      P(X) = X^2 +(2-\alpha) X + \frac14 (1-\alpha)(3-\alpha)
    \]
    whose roots are $\alpha/2+1\pm 1/2$. The greatest eigenvalue is
    therefore $(\alpha-1)/2$ which concludes the proof.
  \end{proof}

\subsection{Hypocoercivity}

With the microcoercivity at hand we can now readily prove 
 prove hypocoercivity: That is, we shall prove that for some constant $C< \infty $ and some $\lambda> 0$,
and solution $h_t$  in $\H_\alpha$ of our linearized evolution equation satisfies 
\begin{equation}\label{hyp}
\|h_t\|_{\H_\alpha} \leq  Ce^{-\lambda t}\|h_0\|_{\H_\alpha}\ .
\end{equation}
With this in hand, it is a simple matter to prove the nonlinear stability. We discuss two approaches to proving \eqref{hyp} for our model. One approach applies when the steady state is symmetric in $v$, as in our case. As noted  in \cite{AAC},
whenever this is the case, there is a natural orthonormal basis such that in this basis the streaming operator is represented by uncoupled tridiagonal blocks, while the gain term in ${\mathcal L}_\alpha$ is represented by uncoupled lower triangular blocks. 
This structure permits the extraction of simple, explicit bounds on $\mu$.  Another approach, developed in \cite{DMS} is more abstract and not requiring symmetry of the steady state, provides an efficient route to \eqref{hyp}.
In this section we prove:

\begin{theorem}\label{expl}  The decay estimate \eqref{hyp} is valid with the following explicit values of $\lambda$ and $C$:

\smallskip
\noindent{\it (1)} If $c_\alpha^2\frac{1-\alpha}{2\sqrt{T_\infty}} < \frac12$, we may take $C =4$ and $\lambda = \frac{1-\alpha}{8}$. 

\smallskip
\noindent{\it (2)} $c_\alpha^2\frac{1-\alpha}{2\sqrt{T_\infty}} \geq  \frac12$, we may take $C =4$ and $\lambda = \frac{\sqrt{T_\infty}}{8}$. 

This is also true, with the same constants, if we replace $\H_\alpha$ by $\cH_\alpha^1$, the latter Hilbert space being defined in \eqref{sob}. 
\end{theorem}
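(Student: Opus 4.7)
The strategy is to establish \eqref{hyp} via the hypocoercivity framework of Dolbeault--Mouhot--Schmeiser \cite{DMS}, which converts the microscopic coercivity of Lemma \ref{mco} into full exponential decay by adding a small twist term to the natural $\cH_\alpha$-norm; the twist exploits the transport $\mathcal{S}=v\partial_x$ to dissipate the hydrodynamic modes. The alternative tridiagonal--triangular block approach of \cite{AAC}, based on the symmetry $v\mapsto -v$ of $f_{\infty,\alpha}$, would yield the same result. The key preliminary observations are the following. Let $\Pi h(x,v) := \sigma(x)\, f_{\infty,\alpha}(v)$ with $\sigma(x) := \int_\R h(x,v)\,\dd v$. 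By Lemma \ref{mco}, $\ker\mathcal{L}_\alpha$ is exactly the range of $\Pi$ and $-\langle h,\mathcal{L}_\alpha h\rangle_{\cH_\alpha} \geq \tfrac{1-\alpha}{2}\|(I-\Pi)h\|^2_{\cH_\alpha}$. Setting $B := \mathcal{S}\Pi$, one has $\Pi\mathcal{S}\Pi = 0$ (because $v f_{\infty,\alpha}$ is odd in $v$) and $\|Bh\|^2_{\cH_\alpha} = T_\infty\|\partial_x\sigma\|^2_{L^2(\T)}$; since admissible perturbations satisfy $\int_\T \sigma\,\dd x = 0$, the Poincar\'e inequality on $\T$ yields the macroscopic coercivity $\|Bh\|^2_{\cH_\alpha} \geq 4\pi^2 T_\infty\,\|\Pi h\|^2_{\cH_\alpha}$.

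With these ingredients in hand, I introduce the bounded operator $A := (I+B^*B)^{-1}B^*$ and the modified Lyapunov functional
\[
\mathcal{F}_\delta(h) := \tfrac12\|h\|^2_{\cH_\alpha} + \delta\,\langle Ah, h\rangle_{\cH_\alpha}.
\]
The standard bound $\|A\|_{\cH_\alpha\to\cH_\alpha}\leq 1/2$ (from functional calculus on $B^*B$) gives a norm equivalence between $\mathcal{F}_\delta$ and $\tfrac12\|\cdot\|^2_{\cH_\alpha}$ for $\delta\leq 1/2$, and this equivalence is precisely what produces the multiplicative constant $C=4$ in the decay estimate. Differentiating $\mathcal{F}_\delta$ along the linearized flow $\partial_t h_t = (\mathcal{L}_\alpha - \mathcal{S})h_t$ and using both $\Pi\mathcal{S}\Pi=0$ and the identity $AB = B^*B(I+B^*B)^{-1}$ yields
\[
\frac{d}{dt}\mathcal{F}_\delta(h_t) \leq -\tfrac{1-\alpha}{2}\|(I-\Pi)h_t\|^2_{\cH_\alpha} - \delta\,\frac{4\pi^2 T_\infty}{1+4\pi^2 T_\infty}\|\Pi h_t\|^2_{\cH_\alpha} + \delta\,\mathcal{R}(h_t),
\]
where the remainder $\mathcal{R}$ bundles the cross terms arising from $A\mathcal{S}(I-\Pi)$, $A\mathcal{L}_\alpha$ and their adjoints. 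Each of these auxiliary operators is bounded on $\cH_\alpha$ by direct computation with the explicit Green's function for $I+B^*B$ (which reduces on the $\sigma$-component to $(I-T_\infty\partial_x^2)^{-1}$) together with Gaussian moment estimates, so Young's inequality absorbs $\mathcal{R}$ into the two coercive terms at the price of halved constants. Optimizing $\delta$ produces a single coercivity rate equal to the minimum of the microscopic scale $(1-\alpha)/2$ and an effective macroscopic scale that, after this optimization, behaves like $\sqrt{T_\infty}$; the precise crossover between the two regimes is governed by the threshold $c_\alpha^2(1-\alpha)/(2\sqrt{T_\infty})\lessgtr 1/2$ and yields $\lambda = (1-\alpha)/8$ or $\lambda = \sqrt{T_\infty}/8$ respectively. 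Gr\"onwall's inequality together with the norm equivalence then delivers \eqref{hyp}.

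For the $\cH_\alpha^1$ strengthening, both $\mathcal{S}$ and $\mathcal{L}_\alpha$ commute with $\partial_x$ in the $x$-variable (translation invariance of the equation), so if $h_t$ solves the linearized equation then so does $\partial_x h_t$, in $\cH_\alpha$; applying the $\cH_\alpha$-estimate to $\partial_x h_t$ and adding it to the estimate for $h_t$ gives the $\cH_\alpha^1$-decay with the same $C$ and $\lambda$, since by integration by parts $\|h\|^2_{\cH_\alpha^1} = \|h\|^2_{\cH_\alpha} + \|\partial_x h\|^2_{\cH_\alpha}$. The main technical obstacle is the careful bookkeeping of numerical constants in the Young-inequality step: extracting the clean values $\tfrac{1-\alpha}{8}$ and $\tfrac{\sqrt{T_\infty}}{8}$, together with the precise threshold in $c_\alpha$, requires treating the resulting quadratic form in $(\|\Pi h\|_{\cH_\alpha},\|(I-\Pi)h\|_{\cH_\alpha})$ as a sharp $2\times 2$ matrix optimization in the mixing parameter $\delta$, rather than settling for a non-explicit positive rate as in the abstract version of \cite{DMS}.
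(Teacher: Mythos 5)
Your proposal takes a genuinely different route from the paper's proof of Theorem~\ref{expl}: you reach for the Dolbeault--Mouhot--Schmeiser twist-norm machinery, whereas the paper proves this theorem via the tridiagonal/lower-triangular matrix representation of~\cite{AAC}. The paper does present a DMS-based argument, but only as a \emph{second} proof of the qualitative Theorem~\ref{stab}, precisely because the abstract DMS route does not readily produce the explicit numbers in Theorem~\ref{expl}.

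The concrete gap is in the step where you claim that ``optimizing $\delta$ produces a single coercivity rate equal to the minimum of the microscopic scale $(1-\alpha)/2$ and an effective macroscopic scale that, after this optimization, behaves like $\sqrt{T_\infty}$.'' This is asserted, not derived, and the structure of the DMS estimate points the other way. In your setup the macroscopic coercivity constant is $\lambda_M = 4\pi^2 T_\infty$ (linear in $T_\infty$, coming from Poincar\'e), and the boundedness constant $C_M$ for $\cA\cS(I-\Pi_0)$ and $\cA\cL_\alpha$ is a global operator bound involving \emph{all} non-hydrodynamic modes at once. Nothing in that framework isolates the particular coupling between the momentum mode $H_1$ and the energy mode $H_2$. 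Yet the paper's constants hinge on precisely that coupling: the factor $\sqrt{T_\infty}$ enters because the Gram--Schmidt recurrence $vg_m = \sqrt{T_\infty}\,a_{m+1}g_{m+1} + \sqrt{T_\infty}\,a_m g_{m-1}$ scales the streaming matrix ${\bf S}$ by $\sqrt{T_\infty}$, and the factor $c_\alpha^{-1}$ is literally the matrix entry $a_2$ of ${\bf S}$. The threshold $c_\alpha^2\frac{1-\alpha}{2\sqrt{T_\infty}}\lessgtr\frac12$ comes from balancing the off-diagonal term $c\sqrt{T_\infty}c_\alpha^{-1}$ in a $3\times 3$ remainder block against the microscopic dissipation $(1-\alpha)$ in the energy mode, after choosing an explicit $2\times2$ twist $\PP_k$ of the identity. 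None of this fine structure is visible from $\Pi_0$, $B=\cS\Pi_0$, and $C_M$ alone; carrying your Young-inequality bookkeeping to completion would produce constants in terms of $T_\infty$, $\lambda_m$, $\lambda_M$, $C_M$ with a different functional form, not these ones. So as written the proposal does not establish the stated values of $\lambda$, and the claim that ``the alternative tridiagonal--triangular block approach would yield the same result'' has it backwards: it is the tridiagonal approach that actually yields the stated result, and it is not obvious the DMS route does.

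Two smaller points. First, your norm-equivalence computation gives $C=\sqrt{(1+\delta)/(1-\delta)}\le\sqrt{3}$ at $\delta=1/2$, not $4$; the paper's $C=4$ is a generous rounding, so this is harmless, but your sentence claiming the equivalence ``produces'' $C=4$ overstates what the estimate delivers. Second, your treatment of the $\cH_\alpha^1$ upgrade (differentiating in $x$ and using translation invariance) is correct and is equivalent to the paper's mode-by-mode choice $\omega_k=1+k^2$, so that part is fine.
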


\begin{remark}  This result is stronger than Theorem~\ref{stab} in that it provides explicit bounds on the exponential rate of convergence.
The reason for the $\sqrt{T_\infty}$ dependence of $\lambda$ for small $T_\infty$ is that hypocoercivity depends on the effects of the 
streaming operator $v\partial_x$ to ``mix'' the conserved mass mode into the dissipated modes, and the tridiagonal representation of the streaming operator  given in \eqref{L1L2} shows that its mixing effects are proportional to $\sqrt{T_\infty}$.  When $T_\infty$ is large, there is rapid mixing, but this can do only so much good: The dissipativity of the non-conserved modes as estimated in Lemma \eqref{mco} is bounded independent of $T_\infty$, but the mixing only shares this dissipativity around, it cannot improve the dissipativity no matter how fast it runs. 

It is also worth noting that we have simple bounds on $c_\alpha$: By the arithmetic-geometric mean inequality, $0 \leq 
\frac{T_1T_2}{T^2} \leq 1$, and hence
$2 \leq c_{\alpha,T}^{-2} \leq  3(2-\alpha) -1$.

\end{remark}

The proof of Theorem~\ref{expl} is quite short once one has computed matrix representations of $v\partial_x$ and ${\mathcal L}_\alpha$ with respect to a basis that we now introduce:
The basis is
$\{e^{ikx}g_m(v)\}_{m\geq 0,k\in \Z}$ where
$\{g_m(v)\}_{m\geq 0}$ is the sequence one gets by applying this
Gram-Schmidt orthonormalization procedure to the sequence of functions
$v^m f_{\infty,\alpha}(v)$ for $m\geq 0$.  We write these in the form
$g_m(v) = H_m(v)f_{\infty,\alpha}(v)$
where $H_m$ is a polynomial of degree $m$.  
For $h\in \H_\alpha$, we write 
\begin{equation}\label{expand}
h = \sum_{m\geq 0,k\in \Z} ik \widehat h_m(k) e^{ikx}g_m(v)\ .
\end{equation}

   The  action of the free streaming operator $\cS := v\partial_x$  then is  
   $$
   \cS h(x,v) = \sum_{m\geq 0,k\in \Z} ik \widehat h_m(k)
   e^{ikx}vg_m(v)\ .
   $$
   It is a simple consequence of the fact that $f_{\infty,\alpha}$ is even in $v$ that  for each $m \geq 1$, $vg_m$ is a linear combination of $g_{m-1}$ and $g_{m+1}$; see \cite{AAC}.
Since the operation of multiplication
by $v$ is self-adjoint, it follows that that there exist numbers
$\{a_n\}_{m\geq 1}$ such that
\[
  vg_m(v) =  \sqrt{T_\infty}a_{m+1} g_{m+1}(v) +   \sqrt{T_\infty}a_m g_{m-1}(v)
  \ \mbox{ with the convention } g_{-1} := 0.
\]

For $m=0$, by \eqref{lowh},
$vg_0(v) = vf_{\infty,\alpha} = \sqrt{T_\infty}g_1$, and hence $a_1 = 1$.
Likewise,
\[
  vg_1(v) = \frac{1}{\sqrt{T_\infty}} v^2 f_{\infty,\alpha}
  =  {\sqrt{T_\infty}} \left(\frac{v^2}{T_\infty} - 1\right) f_{\infty,\alpha} +
  \sqrt{T_\infty}g_0(v)
\]
and hence $a_2 = c_{\alpha}^{-1}$. One can work out $a_m$ for or
higher values of $m$, but these are not needed here.

Let $\widehat {\bf h}(k)$ denote the element of $\ell_2$ whose $m$th
component is $\widehat h_m(k)$. Then the corresponding vector of
coefficients for $\cS h$ is given by $ik{\bf S} \widehat {\bf h}(k)$
where ${\bf S}$ is the tri-diagonal matrix
 \begin{equation}\label{L1L2}
  {\bf S}= \sqrt{T_\infty} \left(\begin{array}{ccccc}
  0 &  a_1 &  0 &  0 &  \cdots \\
  a_1 & 0 &  a_2 &  0 &  \cdots \\
  0 &  a_2 & 0 &  a_3  &  \cdots\\
 0 &  0 &  a_3 & 0 &  \cdots\\
 \vdots & \vdots & \vdots & \vdots & \ddots 
  \end{array}\right)   = 
  \sqrt{T_\infty} \left(\begin{array}{ccccc}
  0 & 1 &  0 & 0& \cdots \\
  1 & 0 &  c_{\alpha}^{-1} &  0 & \cdots \\
  0 &  c_{\alpha}^{-1}& 0 &  a_3 & \cdots \\
  0 &  0 &  a_3 &  0 & \cdots\\
   \vdots & \vdots & \vdots & \vdots & \ddots 
  \end{array}\right).
\end{equation}

We now turn to the gain term in the  linearized collision operator.
The linearized collision operator does not act on the spatial 
variable. By projecting~\eqref{eq:lin-proj} in our basis we
get
\begin{equation}\label{preex}
  \widehat{\cL_\alpha h}(k,v)   =  \widehat{h}_0(k)g_0(v)
  + \frac{\alpha}{2 c^2_{\alpha}} \widehat{h}_2(k) H_2(v)
  M_{T_\infty}(v) - \widehat{h}(k,v).
\end{equation}

For each $k$, the action of $\cL_\alpha$ in the $\{g_n\}_{m\geq 0 }$
basis is given by
\[
  \widehat{\cL_\alpha h}_m(k) = ({\bf L}_\alpha \widehat{\bf h}(k))_m
\]
where $\LL_\alpha + {\bf Id}$ is the matrix whose first column is the
unit vector $(1,0,0,\dots)$, whose third column is the vector
$(0,0,\alpha, b_3,b_4, \dots )$, with all other columns being zero, and with
$$
b_m :=  \frac{\alpha}{2 c^2_{\alpha}} \widehat{h}_2(k) \int H_2(v) H_m(v)
  M_{T_\infty}(v) {\rm d}v\ , 
$$
so that, in particular, $b_0 =b_1 = 0$ and $b_2 = \alpha$.

Therefore, we may rewrite our linearized equation as  the decoupled system of equations
\begin{equation}\label{decooup}
\partial_t \widehat{\bf h}(k) = (\LL_\alpha -ik{\bf S})\widehat{\bf h}(k)\ ,
\end{equation}
for each $k\in \Z$. For $k=0$ we simply have from~\eqref{preex}
$\partial_t \widehat{\bf h}(0)  = -\widehat{\bf h}(0)$
since $\widehat h(0)_0 =  \widehat h(0)_2 = 0$. 

For each $k\neq 0$, define $\CC_k = -(\LL_\alpha -ik{\bf S})$.  We
seek a positive definite matrix $\PP_k$ such that for some fixed
$\lambda>0$,
\begin{equation}\label{modno}
  \fa k \in \Z, \quad \CC_k^* \PP_k + \PP_k\CC_k \geq 2\lambda \PP_k.
\end{equation}

Then if we define the Lyapunov function $e(h)$ by
$e(h) = \sum_{k\in \Z} \omega_k \left\langle \widehat{\bf h}(k), \PP_k
    \widehat{\bf h}(k)\right\rangle_{\ell_2}$,
for {\em any} sequence $\{\omega_k\}_{k\in \Z}$ of positive numbers we
have that for any solution of our linearized equation with initial
data $h_0$ with $e(h) < \infty$,
\begin{equation}\label{decay}
  \frac{{\rm d}}{{\rm d}t} e(h(t))  \leq -2\lambda e(h(t)) \quad
  \Longrightarrow \quad e(h(t))  \leq e^{-2t\lambda} e(h_0).
\end{equation}
We will construct the matrices $\PP_k$ so that for some
$C>0$
\begin{equation}\label{decay}
  \fa k \in \Z\backslash\{0\}, \quad K {\bf I}   \leq \PP_k  <\frac{1}{K}{\bf I}.
\end{equation}
This implies  that the function $e(h)$ is equivalent to the norm on $\H_\alpha$ if we take each $\omega_k=1$. We then conclude that \eqref{hyp} is valid with  $C = \frac1K$ and the value of $\lambda$ appearing in \eqref{modno}.
By making other choices for $\omega_k$,  we obtain decay in various Sobolev type norms. For instance, taking $\omega_k = 1 + k^2$, we would obtain  decay in the norm defined in \eqref{sob}. 
With the matrix representation computed, and Lemma~\ref{mco} at our disposal, we are ready to prove Theorem~\ref{expl}

\begin{proof}[Proof of Theorem~\ref{expl}]
Because ${\mathcal L}_\alpha$ is lower triangular  with positive  diagonal entries that are uniformly bounded away from zero except of course for the zero in the upper left, and because 
${\bf S}$ is tridiagonal, a simple prescription from \cite{AAC} provides $\PP_k$.
For a parameter $c\in (0,1)$ to be chosen later, define
$\PP_k(a)$ by entering
\begin{equation}\label{Ptwo}
  \left(\begin{array}{cc} 1  & -i c/k   \\ 
          i c/k & 1  \end{array}\right)\  
\end{equation}
as its upper-left $2\times 2$ block, with all other entries being
those of the identity. The eigenvalues of $\PP_k(a)$ are $1$, $1+c/k$
and $1-c/k$, and hence \eqref{decay} is satisfied with $K = 1-c$.

Then simple computations show that 
$$\CC_k^*\PP_k(a) + \PP_k(a)\CC_k  = (2{\bf I} - \LL - \LL^T) + {\bf R}$$
where ${\bf R}$ is the matrix whose upper $3\times 3$ block is 
\begin{equation}\label{block}
\left(\begin{array}{ccc}
2 c\sqrt{T_\infty}  & -i c/k & c\sqrt{T_\infty} c_\alpha^{-1}  \\
i c/k &  0 &0 \\
c\sqrt{T_\infty} c_\alpha^{-1}   & 0 & 0   \\
  \end{array}\right) \,
\end{equation}
and whose remaining entries are all zero. Hence for any (real)  ${\bf h} = (h_0,h_1,h_2\dots)\in \ell^2$, 
$$\langle {\bf h}, {\bf R}{\bf h}\rangle_{\ell^2} =  2 c\sqrt{T_\infty} h_0^2+ 2 c\sqrt{T_\infty} c_\alpha^{-1}h_0h_2 \geq c\sqrt{T_\infty}(h_0^2 - c_\alpha^{-2}h_2^2)\ .$$
By Lemma \eqref{mco},
$$\langle {\bf h},(2{\bf I} - \LL - \LL^T){\bf h}\rangle_{\ell_2} \geq (1-\alpha)\sum_{m=1}^\infty h_m^2\ .$$
Combining these estimates with ${\bf I} \geq (1-c)\PP_k$, \eqref{modno} then holds with
$$2\lambda = \sqrt{T_\infty}\min\left\{ c \ ,\ \frac{1-\alpha}{\sqrt{T_\infty}} - c c_\alpha^{-2}\right\}(1-c)\ .
$$
We now choose $c$ so that $c =\min\{\frac12,  c_\alpha^2\frac{1-\alpha}{2\sqrt{T_\infty}}\}$.
If $c_\alpha^2\frac{1-\alpha}{2\sqrt{T_\infty}} < \frac12$, then 
$\lambda \geq  \frac{1-\alpha}{8}$. 
If $c_\alpha^2\frac{1-\alpha}{2\sqrt{T_\infty}} \geq \frac12$, then 
$\lambda \geq  \frac{\sqrt{T_\infty}}{8}$. 
\end{proof}

 We  now explain another route to \eqref{hyp} which is relies on general abstract results obtained in \cite{DMS}.:

\begin{theorem}[Abstract result of hypocoercivity from~\cite{DMS}]
  \label{theo:hypo}
\label{theo:abstract}
Consider a Hilbert space $\cH$ and two closed unbounded operators
$\cS$ and $\cL$ such that:

\begin{itemize}
\item[{\bf (H1)}] \textbf{Microscopic coercivity}: There exists an
  orthogonal projection $\Pi_0$ such that
  $\cL \Pi_0 = \Pi_0 \cL =0$ and there is $\lambda_m > 0$ such that
\[
  \fa h \in \mbox{\emph{Domain}}(\cL), \quad -\langle \cL h,h \rangle \ge
  \lambda_m\,\|(I-\Pi_0)\,h\|^2.
\]

\item[{\bf (H2)}] \textbf{Macroscopic coercivity}: The operator $\cS$
  satisfies $\cS^*= -\cS$ (skew symmetry) and there exists
  $\lambda_M > 0$ such that
\[
\| \cS \Pi_0 h\|^2 \ge \lambda_M\,\|\Pi_0 h\|^2\quad\mbox{\emph{for
    all}}\; h\in \cH \;\mbox{\emph{such that}}\;\Pi_0(h) \in \mbox{\emph{Domain}}(\cS).
\]

\item[{\bf (H3)}] \textbf{Consistency}: $\Pi_0 \cS \Pi_0 = 0$.

\item[{\bf (H4)}] \textbf{Auxiliary operator}: Define
  $\cA := (1 + (\cS\Pi_0)^*(\cS\Pi_0))^{-1} (\cS\Pi_0)^*$ and assume that
  $\cA \cS (1-\Pi_0)$ and $\cA\cL$ are bounded with a constant $C_M>0$
  such that
  \[
    \fa h \in \cH, \quad \|\cA \cS (1-\Pi_0)h\| + \|\cA\cL h\| \le C_M\,\|(1-\Pi_0)h\|.
  \]
\end{itemize}

Then there exist positive constants $\lambda>0$ and $C>0$, which are
explicitly computable in terms of $\lambda_m$, $\lambda_M$, and $C_M$,
such that
\[
\label{eq:decH}
\fa h \in \cH, \ \fa t \ge 0, \quad \big\|e^{t(\cL-\cS)} h\big\| \le C e^{-\lambda t}\,\|h\|.
\]
\end{theorem}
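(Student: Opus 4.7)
The proof follows the Dolbeault--Mouhot--Schmeiser method. The plan is to construct a modified entropy functional
\[
\mathsf{H}[h] := \tfrac{1}{2}\|h\|_\cH^2 + \varepsilon \langle \cA h, h\rangle_\cH, \qquad \varepsilon \in (0,1),
\]
show it is comparable to $\|h\|_\cH^2$, and prove the dissipation estimate $\tfrac{d}{dt}\mathsf{H}[h(t)] \leq -2\lambda \,\mathsf{H}[h(t)]$ along solutions of $\partial_t h = (\cL-\cS)h$. The conclusion then follows by Gr\"onwall and the norm equivalence, with $C$ and $\lambda$ explicit in $\lambda_m, \lambda_M, C_M$.

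The preliminary step is to bound $\cA$. Writing $B := (\cS\Pi_0)^*(\cS\Pi_0)$, so that $\cA = (I+B)^{-1}(\cS\Pi_0)^*$, the functional calculus applied to $B$ with the elementary inequalities $\sup_{x\geq 0}\sqrt{x}/(1+x)\leq 1/2$ and $\sup_{x\geq 0}x/(1+x)\leq 1$ gives $\|\cA\|\leq 1/2$ and $\|\cS\cA\|\leq 1$. The first bound produces the norm equivalence $(1-\varepsilon)\|h\|^2/2 \leq \mathsf{H}[h] \leq (1+\varepsilon)\|h\|^2/2$. Two structural facts will be used below: since $B$ commutes with $\Pi_0$ and $(\cS\Pi_0)^*$ has range in $\text{Range}(\Pi_0)$, one has $\text{Range}(\cA)\subseteq \text{Range}(\Pi_0)$; combined with (H3), this yields that $\cS\cA$ maps $\cH$ into $(I-\Pi_0)\cH$.

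Next we compute the dissipation. The quadratic term gives $\langle \cL h, h\rangle \leq -\lambda_m\|(I-\Pi_0)h\|^2$ by (H1), using the skew-adjointness of $\cS$ from (H2). For the cross term, a direct calculation using $\cS^* = -\cS$ yields
\[
\tfrac{d}{dt}\langle \cA h, h\rangle = -\langle \cA\cS h, h\rangle + \langle \cS\cA h, h\rangle + \langle \cA\cL h, h\rangle + \langle \cL^*\cA h, h\rangle.
\]
The last term vanishes because $\Pi_0\cL = 0$ forces $\cL^*\Pi_0 = 0$ and $\text{Range}(\cA)\subseteq \text{Range}(\Pi_0)$; the third reduces to $\langle \cA\cL(I-\Pi_0)h, h\rangle$ since $\cL\Pi_0 = 0$. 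The key \emph{negative} contribution is
\[
-\langle \cA\cS\Pi_0 h, h\rangle = -\langle B(I+B)^{-1}\Pi_0 h, \Pi_0 h\rangle \leq -\tfrac{\lambda_M}{1+\lambda_M}\|\Pi_0 h\|^2,
\]
using (H2) (which gives $B\geq \lambda_M$ on $\text{Range}(\Pi_0)$) and the monotonicity of $x\mapsto x/(1+x)$. The remaining pieces $\langle\cA\cS(I-\Pi_0)h, h\rangle$, $\langle\cS\cA h, h\rangle$, $\langle\cA\cL h, h\rangle$ are each bounded by $C'_M\|(I-\Pi_0)h\|\cdot\|h\|$ using (H4) together with the boundedness of $\cS\cA$ and the crucial range restriction $\text{Range}(\cS\cA)\subseteq (I-\Pi_0)\cH$.

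Splitting $\|(I-\Pi_0)h\|\|h\|$ by Young's inequality gives
\[
\tfrac{d}{dt}\mathsf{H}[h] \leq -\left(\lambda_m - \varepsilon K_\eta\right)\|(I-\Pi_0)h\|^2 - \varepsilon\left(\tfrac{\lambda_M}{1+\lambda_M} - \eta K\right)\|\Pi_0 h\|^2
\]
for constants $K, K_\eta$ depending on $C_M$. Choosing first $\eta$ small (killing the macroscopic leak) and then $\varepsilon$ small (preserving microscopic coercivity) makes both coefficients bounded below by positive multiples of $\varepsilon$, and the norm equivalence converts this into $\tfrac{d}{dt}\mathsf{H}[h]\leq -2\lambda\mathsf{H}[h]$ with an explicit rate. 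The principal technical obstacle is the bookkeeping needed to verify that every ``bad'' cross-term carries a factor $\|(I-\Pi_0)h\|$; the decisive ingredient is hypothesis (H3), which forces $\text{Range}(\cS\cA)\subseteq (I-\Pi_0)\cH$ and thereby prevents the $\varepsilon\langle\cS\cA h, h\rangle$ contribution from overwhelming the gain from the positive term.
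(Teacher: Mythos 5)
Your proposal is correct and follows essentially the same route as the paper: the Dolbeault--Mouhot--Schmeiser modified entropy $\mathsf{H}[h]=\tfrac12\|h\|^2+\varepsilon\langle\cA h,h\rangle$, the cancellation $\cL^*\cA=0$ via $\cA=\Pi_0\cA$ and $\Pi_0\cL=0$, the coercive contribution $\langle\cA\cS\Pi_0 h,h\rangle\ge\frac{\lambda_M}{1+\lambda_M}\|\Pi_0 h\|^2$, and a Young-inequality split followed by choosing $\delta$ (your $\eta$) and then $\varepsilon$ small. The one cosmetic difference: you obtain the operator-norm bounds $\|\cA\|\le 1/2$ and $\|\cS\cA\|\le 1$ by functional calculus applied to $TT^*$ with $T=\cS\Pi_0$, and then invoke the range restriction $\mathrm{Range}(\cS\cA)\subseteq(I-\Pi_0)\cH$ (a consequence of (H3)) to attach the factor $\|(I-\Pi_0)h\|$ to the $\langle\cS\cA h,h\rangle$ term; the paper instead proves the slightly sharper pointwise bounds $\|\cA h\|\le\tfrac12\|(1-\Pi_0)h\|$ and $\|\cS\cA h\|\le\|(1-\Pi_0)h\|$ by a direct scalar-product argument that uses (H3) in the identity $\langle h,\cS\Pi_0 g\rangle=\langle(1-\Pi_0)h,\cS\Pi_0 g\rangle$. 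Both paths exploit (H3) at the same juncture and yield identical final estimates, so this is a presentational rather than mathematical divergence.
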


We  give a
short proof of  here for the convenience of the reader.

\begin{proof}[Proof of Theorem~\ref{theo:hypo}]
Define the modified norm
\[
\cH[h] := \frac 12\,\|h\|^2 + \eps\,\langle \cA h, h \rangle
\]
where $\eps>0$ will be chosen small enough below. Given $h_t :=
e^{t(\cL-\cS)}h$, we compute 
\begin{align*}
  \frac{{\rm d}}{{\rm d}t}\,\cH[h_t] = & \langle \cL h_t,h_t \rangle
  - \eps \langle\cA\cS\Pi_0 h_t,h_t \rangle -
  \eps \langle \cA\cS(1-\Pi_0)h_t,h_t \rangle\\
  & + \eps \langle \cS\cA
  h_t,h_t \rangle + \eps\langle\cA\cL h_t,h_t\rangle \\
  =:& -\cD[h_t].
\end{align*}
We have used here that $\cL^* \cA =0$ which follows from
$\cA = \Pi_0 \cA$ and $\Pi_0 \cL =0$ in {\bf (H1)}. By {\bf (H1)}, {\bf
  (H2)}, and by
$\cA\cS\Pi_0 = (1 + (\cS\Pi_0)^*(\cS\Pi_0))^{-1} (\cS\Pi_0)^*(\cS\Pi_0)$, the
sum of the first two terms in $\cD[h_t]$ is coercive:
\[
  -\langle \cL h_t,h_t \rangle + \eps \langle\cA\cS\Pi_0 h_t,h_t\rangle
  \ge \min\Big\{ \lambda_m, \frac{\eps \lambda_M}{1+\lambda_M}
  \Big\}\,\|h_t\|^2.
\]

Let us prove that the operators $\cA$ and $\cS\cA$ are bounded:
\begin{equation}
  \label{ExplicitBound}
  \fa h \in \cH, \quad \|\cA h\| \le \frac12 \|(1-\Pi_0)h\| \quad\mbox{and}\quad \| \cS\cA h\| \le
  \|(1-\Pi_0)h\|.
\end{equation}

The equation $\cA h = g = \Pi_0 g$ (remember that $\cA = \Pi_0 \cA$) is
equivalent to
\[\label{Af=g}
(\cS\Pi_0)^* h = g + (\cS\Pi_0)^*(\cS\Pi_0)g.
\]
Taking the scalar product of the
above equality with $g$ and using {\bf (H3)}, we get
\begin{eqnarray*}
  \|g\|^2 + \|\cS\Pi_0 g\|^2
  &=& \langle h,\cS\Pi_0 g \rangle = \langle(1-\Pi_0)h,\cS\Pi_0 g \rangle \\
  &\le& \|(1-\Pi_0)h\| \|\cS\Pi_0 g\| \le \frac14 \|(1-\Pi_0)h\|^2
        + \|\cS\Pi_0 g\|^2,
\end{eqnarray*}
which completes the proof of~\eqref{ExplicitBound}.

The first inequality in \eqref{ExplicitBound} implies that $\H[h]$ is
equivalent to $\|h\|^2$:
\begin{equation}\label{equivalence}
  \frac 12 (1-\eps) \|h\|^2
  \le \H[h] \le \frac 12 (1+\eps) \|h\|^2.
\end{equation}
The second inequality in \eqref{ExplicitBound} and {\bf
  (H1)-(H2)-(H3)-(H4)} imply
\begin{eqnarray*}
  \cD[f]
  &\kern -3pt\ge
  & \kern -3pt \lambda_m \|(1-\Pi_0)h_t\|^2 +
    \frac{\eps\lambda_M}{1+\lambda_M} \|\Pi_0 h_t\|^2 - \eps (1+C_M)
    \|(1-\Pi_0)h_t\| \|h_t\| \\
  &\kern -3pt\ge
  &\kern -3pt \left[\lambda_m - \eps(1+C_M)\frac{1}{2\delta}\right]
    \|(1-\Pi_0)h_t\|^2
    + \eps\left[\frac{\lambda_M}{1+\lambda_M} -
    (1+C_M)\frac{\delta}{2}\right] \|\Pi_0 h_t\|^2
\end{eqnarray*}
for an arbitrary $\delta>0$. By choosing first $\delta$ and then
$\eps$ small enough, a positive constant $\kappa$ can be found, such
that $\cD[h_t]\ge \kappa\|h_t\|^2$. Using \eqref{equivalence}, this
implies
\[
\frac{{\rm d}}{{\rm d}t}\,\cH[h_t] \le -\frac{2\kappa}{1+\eps}\cH[h_t],
\]
completing the proof with $\lambda = \kappa/(1+\eps)$ and
$C = \sqrt{1+\eps}/\sqrt{1-\eps}$.
\end{proof}

\begin{proof}[Second proof of Theorem~\ref{stab}]
  We consider $\alpha \in [0,1)$ (for $\alpha=1$
  the result is already known from \cite{AAC}). We apply the previous
  Theorem~\ref{theo:hypo} with $\cH$ being the subspace  of
  $\cH_\alpha = L^2(f_{\infty,\alpha}^{-1}{\rm d}x{\rm d}v)$ consisting of functions that satisfy
   the zero global mass condition
  $\int_{\T \times \R} h(x,v) \dd x \dd v=0$. We take 
    $\cS = v \partial_x$  and $\cL= \cL_\alpha$,
  the linearized operator defined in~\eqref{eq:lin-proj}, and
  $\Pi_0(h) = (\int_\R h \dd v)f_{\infty,\alpha}$ defined
  in~\eqref{eq:projection}. Then {\bf (H1)} is proved in
  Lemma~\ref{mco} and {\bf (H2)} follows from
  \[
    \int_{\T \times \R} \left( v \partial_x \sigma f_{\infty,\alpha}
    \right)^2 \frac{\dd x \dd v}{f_{\infty,\alpha}} = T_\infty \|
    \partial_x \sigma \|^2 _{L^2(\T)} \ge T_\infty \| \sigma \|_{L^2(\T)}
  \]
  where we have used  the Poincar\'e inequality in the
  unit torus, and the zero global mass condition. {\bf (H3)} follows
  from $\int_\R v f_{\infty,\alpha}(v) \dd v =0$. Finally, to prove
  {\bf (H4)}, we first figure out some explicit formula for $\cA$:
  \[
    \begin{cases}
      \ds \cA h = - \left[\left( 1-T_\infty \partial_x ^2 \right)^{-1}
        \partial_x j \right] f_{\infty,\alpha}, & \ds \quad j(x) := \int_\R
      v h(x,v) \dd v \\[3mm]
      \ds \cA \cS h = - \left[\left( 1-T_\infty \partial_x ^2 \right)^{-1}
        \partial_x^2 \tau \right] f_{\infty,\alpha}, & \ds \quad \tau(x) := \int_\R
      v^2 h(x,v) \dd v.
    \end{cases}
  \]
  Since $\left( 1-T_\infty \partial_x ^2 \right)^{-1} \partial_x$ and
  $\left( 1-T_\infty \partial_x ^2 \right)^{-1} \partial_x^2$ are
  bounded operators on $L^2(\T)$, we deduce that $\cA$ and $\cA \cS$
  are bounded, and {\bf (H4)} follows since $\cL_\alpha$ is bounded
  and $\cL_\alpha \Pi_0=0$. This concludes the proof of hypocoercivity
  for our linearized operator $\cL_\alpha -\cS$.
  
  \subsection{Nonlinear stability} We close this section by proving
nonlinear stability. Let
  $\alpha \in [0,1)$ and consider a probability density
  $f_{in} \in \cH_\alpha$ and define
  $h_{in} := f_{in} - f_{\infty,\alpha} \in \cH_\alpha$. This
  fluctuation has zero global mass by definition. We define the
  solution through
  \begin{equation*}
    h_t = e^{t(\cL_\alpha -\cS)} h_{in} + \int_0 ^t
    e^{(t-s)(\cL_\alpha-\cS)} \cR[h_s] \dd s
  \end{equation*}
  with the nonlinear remainder term defined by 
  \[
   \cR[h] :=  \M_f - \M_{f_{\infty,\alpha}}
   -\left[  M_{T_\infty}\sigma
     + \frac{1}{2}  \left(\frac {v^2}{T_\infty} - 1\right)
     M_{T_\infty}(v)\left(\frac{1}{T_\infty} \tau -
       \sigma\right) \right]
 \]
 with $\sigma$ and $\tau$ defined in terms of $h$ as before. Taylor
 expansions and straightforward calculations, using the multiplicative property of the $H^1(\T)$ Sobolev norm,  show that
 \[
   \| \cR[h] \|_{\cH_\alpha^1} \lesssim \| \sigma\|^2_{H^1(\T)} + \|
   \tau \|^2 _{H^1(\T)} \lesssim \| h\|^2_{\cH_\alpha^1}. 
 \]
(For more detail in a closely related argument, see \cite{AAC}.) We deduce the a priori estimate
 \[
   \| h_t\|_{\cH_\alpha^1} \le C e^{-\lambda t} \| h_{in}\|_{\cH_\alpha^1}
   + C' \int_0 ^t e^{-\lambda (t-s)} \| h_s \|_{\cH_\alpha^1} ^2 \dd s
 \]
 for some constants $C,C',\lambda >0$, and one can use the values of $C$ and $\lambda$ provided by Theorem~\ref{expl}. By a standard argument, this shows global existence and
 exponential decay at rate $\lambda$  when $\| h_{in} \|_{\cH_\alpha^1}$ is small enough. 
\end{proof}

\bigskip

\noindent
{\bf Acknowledgements.} E.~C. acknowledges partial support from NSF grant DMS--174625.  J.~L. acknowledges support from the AFOS under the award number  FA-9500-16-0037
C.~M.  acknowledges partial support
from the ERC grant MAFRAN. He also thanks Rutgers University and the
IAS for their invitations, during which this work was started.

\bigskip
\signec
\signre
\signjl
\signrm
\signcm

\end{document}